\documentclass[12pt]{article}

\usepackage{amsmath}
\usepackage{amssymb}
\usepackage{amsfonts}
\usepackage{latexsym}
\usepackage{color}

\catcode `\@=11 \@addtoreset{equation}{section}

\catcode `\@=12



  \voffset1cm

\newcommand{\be}{\begin{equation}}
\newcommand{\en}{\end{equation}}
\newcommand{\bea}{\begin{eqnarray}}
\newcommand{\ena}{\end{eqnarray}}
\newcommand{\beano}{\begin{eqnarray*}}
\newcommand{\enano}{\end{eqnarray*}}
\newcommand{\bee}{\begin{enumerate}}
\newcommand{\ene}{\end{enumerate}}

\newcommand{\mc}{\mathcal}

\newcommand{\D}{{\mc D}}
\newcommand{\V}{{\mc V}}

\newcommand{\Sc}{{\cal S}}
\newcommand{\E}{{\cal E}}
\newcommand{\F}{{\cal F}}
\newcommand{\G}{{\cal G}}

\newcommand{\Lc}{{\cal L}}

\newcommand{\1}{1 \!\! 1}

\newcommand{\Hil}{\mc H}

\newtheorem{thm}{Theorem}

\newtheorem{lemma}[thm]{Lemma}
\newtheorem{prop}[thm]{Proposition}
\newtheorem{defn}[thm]{Definition}

\newenvironment{proof}{\noindent {\bf Proof --}}{\hfill$\square$ \vspace{3mm}\endtrivlist}

\catcode `\@=11 \@addtoreset{equation}{section}
\catcode `\@=12

\textwidth17cm \textheight21cm

\hoffset-1.5cm \voffset-1cm

\begin{document}

\thispagestyle{empty}

\vspace*{2cm}

\begin{center}
{\Large \bf Pseudo-bosons and bi-coherent states out of $\Lc^2(\mathbb{R})$}   \vspace{2cm}\\

{\large F. Bagarello}\\
  Dipartimento di Ingegneria,
Universit\`a di Palermo,\\ I-90128  Palermo, Italy\\
and I.N.F.N., Sezione di Napoli\\
e-mail: fabio.bagarello@unipa.it\\
home page: www1.unipa.it/fabio.bagarello

\end{center}

\vspace*{2cm}

\begin{abstract}
\noindent In this paper we continue our analysis on deformed canonical commutation relations and on their related pseudo-bosons and bi-coherent states. In particular,  we show how to extend the original approach outside the Hilbert space $\Lc^2(\mathbb{R})$, leaving untouched the possibility of defining eigenstates of certain number-like operators, manifestly non self-adjoint, but opening to the possibility that these states are not square-integrable. We also extend this possibility to bi-coherent states, and we discuss in many details an example based on a couple of superpotentials first introduced in \cite{bag2010jmp}. The results deduced here belong to the same distributional approach to pseudo-bosons first proposed in \cite{bag2020JPA}.

\end{abstract}

\vspace{2cm}


\vfill


\newpage

\section{Introduction}

It is today out of doubt that non self-adjoint Hamiltonians describe relevant physical systems: gain-loss systems, or systems having $PT$-symmetry are often connected with observables which, at least apparently, do not satisfy the {\em standard} self-adjointness condition: if $\Sc$ is such a system, described in a certain Hilbert space $\Hil$, with scalar product $\langle .,.\rangle$, its {\em natural} adjoint is defined in term of the scalar product: given $X\in B(\Hil)$, the set of bounded operators on $\Hil$\footnote{We restrict here to bounded operators for simplicity. The adjoint can be defined also for unbounded operators, adding some conditions on the domains.}, its adjoint $X^\dagger$ is defined by $\langle X^\dagger f,g\rangle=\langle f,Xg\rangle$, $\forall f,g\in\Hil$. A self-adjoint operator $H$ satisfies the equality $H=H^\dagger$. Since the seminal paper \cite{ben1} it became clear that it is possible to have a {\em meaningful physical models} also in presence of non self-adjoint Hamiltonians, at least if these Hamiltonians satisfy some different, and surely more physically motivated, symmetry conditions. Since then the interest in this {\em extended} quantum mechanics diffused among physicists, and produced several results, both from a theoretical and from an experimental point of view. Also, this interest was often shared by mathematicians, who started to contribute to the subject. What follows is a very partial (and biased) list of references, mainly focused on theoretical and mathematical aspects of this research: \cite{benbook}-\cite{petr3}.

From a mathematical side, loosing self-adjointness of the Hamiltonian $H$ immediately implies that its eigenvalues could be complex and its eigenvectors can be not mutually orthogonal. While the first aspect can be easily controlled, the second is more subtle. More explicitly, simple conditions ensuring that, even if $H\neq H^\dagger$, its eigenvalues are all real, are well known. For instance, it is sufficient that $H$ and $H^\dagger$ are similar, i.e., that an invertible operator $S$ exists such that $H=SH^\dagger S^{-1}$. It is important to stress once again that we are restricting here to bounded operators. Otherwise, the situation is not that simple, but many results exist also in this case, see \cite{bagbookPT}, for instance. As for the eigestates of $H$, these are no longer orthogonal, in general. However, a biorthogonal set can usually be found, made of the eigenstates of $H^\dagger$. However, even if the two sets of eigenvectors we find are usually complete (or total) in $\Hil$, they are not necessarily bases for $\Hil$. These aspects are discussed, for instance, in \cite{davies,baginbagbook}.

Suppose now that $H$ is factorizable, i.e. that two operators $A$ and $B$ exist, bounded or not, such that $H=BA$. This is what happens, in particular, for the very well known quantum harmonic oscillator, for which $H_0=c^\dagger c$, where $[c,c^\dagger]=\1$, the identity operator on $\Hil$. It is known that, in this case, $c$, $c^\dagger$ and $H_0$ are unbounded. Still there is a common domain of functions for all these (and other) operators.  $H_0$ has an orthonormal (o.n.) basis of eigenvectors, written in terms of Hermite polynomials.
Not surprisingly, in the very same way in which an o.n set of eigenvectors of, say, $H_0$ is {\em doubled} in two biorthogonal sets of eigenstates of $H$ and $H^\dagger$ when $c$ and $c^\dagger$ are replaced by $A$ and $B$, $[A,B]=\1$, the standard coherent state, eigenstate of $c$, is doubled into two different vectors, the so-called {\em bi-coherent states}, which are eigenstates of $A$ and $B^\dagger$ respectively.  This was discussed first in \cite{tri}, and then, with a slightly more mathematical taste, in \cite{bag2010}-\cite{bag2020}.

In all what discussed in the cited references, the role of the Hilbert space $\Hil$ was essential: operators are defined on $\Hil$ (or on some subspace of $\Hil$) and the vectors all belong to $\Hil$. However, it is known that there exist alternative approaches to quantum mechanics, particularly relevant when unbounded operators are involved. This is, for instance, the approach based on {\em rigged Hilbert spaces}, see for instance \cite{rigged1}-\cite{rigged4}. Another possibility, which was recently introduced in a context which is relevant for this paper, makes use of distribution theory, \cite{bgst} and \cite{bag2020JPA}. The essential remark, in both these papers, is that in concrete and simple physical problems it can be useful to abandon Hilbert spaces and to work outside, say, some $\Lc^2(\mathbb{R})$, with or without a measure. In other words, the (analytic) problems are not really solved even changing the scalar product, and replacing $\Lc^2(\mathbb{R})$ with $\Lc^2(\mathbb{R},p(x)\,dx)$, for some specific measure $p(x)$. This is part of what we will see also here. In fact, this aspect will be discussed here in some details in connection with a pseudo-bosonic system first introduced in \cite{bag2010jmp}, and then considered in \cite{bag2011} and, more recently, in \cite{bag2020}. As we will see, this system still has something to teach, both in the context of weak pseudo-bosons, and in connection with coherent states.

 The paper is organized as follows: in the next section we will list few results and definitions on pseudo-bosons and bi-coherent states which are relevant for us. In Section \ref{sectIII} we will consider two pseudo-bosonic operators defined by means of two different functions known as {\em superpotentials}, and their connected number-like operators, and we will focus on their eigenstates, considering the case in which these are square-integrable and when they are not. We will try to stay as general as possible, not fixing the form of the superpotentials: in doing so, we will extend many of the results deduced in \cite{bag2020,bag2010jmp,bag2011}, which were found under very special choices of superpotentials. The same dual situation is also considered in Section \ref{sectweakbcs}: the first part of this section is devoted to bi-coherent states living in $\Lc^2(\mathbb{R})$. In the second part we will define, for the first time in our knowledge, what we call {\em weak} bi-coherent states using the language of distribution, i.e., as continuous functionals on some particular function space. Section \ref{sectconl} contains our conclusions.

\section{Preliminaries}

In this paper we will reconsider a pair of ladder operators $A$ and $B$, with $B\neq A^\dagger$, originally introduced in \cite{bag2010jmp}, and more recently considered in \cite{bag2020}, to discuss how the notion of pseudo-bosons, and bi-coherent states, can be extended outside $\Lc^2(\mathbb{R})$. This analysis continues what was done in \cite{bag2020JPA}, where our interest was motivated by a rigorous analysis of the position and the momentum operators. To keep the paper self-contained, we devote this section to list few useful definitions and results on pseudo-bosons and on bi-coherent states.

\subsection{$\D$-pseudo bosons: basic facts}\label{sectpbs}

Let $\Hil$ be a given Hilbert space with scalar product $\left<.,.\right>$ and related norm $\|.\|$. Let $a$ and $b$ be two operators
on $\Hil$, with domains $D(a)\subset \Hil$ and $D(b)\subset \Hil$ respectively, $a^\dagger$ and $b^\dagger$ their adjoint, and let $\D$ be a dense subspace of $\Hil$
such that $a^\sharp\D\subseteq\D$ and $b^\sharp\D\subseteq\D$. Here with $x^\sharp$ we indicate $x$ or $x^\dagger$. Of course, $\D\subseteq D(a^\sharp)$
and $\D\subseteq D(b^\sharp)$.

\begin{defn}\label{def21}
	The operators $(a,b)$ are $\D$-pseudo bosonic  if, for all $f\in\D$, we have
	\be
	a\,b\,f-b\,a\,f=f.
	\label{A1}\en
\end{defn}

When $b=a^\dagger$, this is simply the canonical commutation relation (CCR) for ordinary bosons. However, when the CCR is replaced by (\ref{A1}), the situation changes, becoming mathematically more interesting. In particular, it is useful to assume the following:

\vspace{2mm}

{\bf Assumption $\D$-pb 1.--}  there exists a non-zero $\varphi_{ 0}\in\D$ such that $a\,\varphi_{ 0}=0$.

\vspace{1mm}

{\bf Assumption $\D$-pb 2.--}  there exists a non-zero $\Psi_{ 0}\in\D$ such that $b^\dagger\,\Psi_{ 0}=0$.

\vspace{2mm}

It is obvious that, since $\D$ is stable under the action of $b$ and $a^\dagger$, then  $\varphi_0\in D^\infty(b):=\cap_{k\geq0}D(b^k)$ and  $\Psi_0\in D^\infty(a^\dagger)$, so
that the vectors \be \varphi_n:=\frac{1}{\sqrt{n!}}\,b^n\varphi_0,\qquad \Psi_n:=\frac{1}{\sqrt{n!}}\,{a^\dagger}^n\Psi_0, \label{A2}\en
$n\geq0$, can be defined and they all belong to $\D$. Hence, they also belong to the domains of $a^\sharp$, $b^\sharp$ and $N^\sharp$, where $N=ba$. We see that, from a practical point of view, $\D$ is the natural space to work with and, in this sense, it is even more relevant than $\Hil$. $\D$ is a sort of {\em physical vector space}, endowed with a scalar product and with its related norm, but $\D$ is not complete (in general) with respect to this norm.  Let's put $\F_\Psi=\{\Psi_{ n}, \,n\geq0\}$ and
$\F_\varphi=\{\varphi_{ n}, \,n\geq0\}$.
It is  simple to deduce the following lowering and raising relations:
\be
\left\{
\begin{array}{ll}
	b\,\varphi_n=\sqrt{n+1}\varphi_{n+1}, \qquad\qquad\quad\,\, n\geq 0,\\
	a\,\varphi_0=0,\quad a\varphi_n=\sqrt{n}\,\varphi_{n-1}, \qquad\,\, n\geq 1,\\
	a^\dagger\Psi_n=\sqrt{n+1}\Psi_{n+1}, \qquad\qquad\quad\, n\geq 0,\\
	b^\dagger\Psi_0=0,\quad b^\dagger\Psi_n=\sqrt{n}\,\Psi_{n-1}, \qquad n\geq 1,\\
\end{array}
\right.
\label{A3}\en as well as the eigenvalue equations $N\varphi_n=n\varphi_n$ and  $N^\dagger\Psi_n=n\Psi_n$, $n\geq0$. If  $\left<\varphi_0,\Psi_0\right>=1$, then
\be \left<\varphi_n,\Psi_m\right>=\delta_{n,m}, \label{A4}\en
for all $n, m\geq0$. Hence $\F_\Psi$ and $\F_\varphi$ are biorthonormal. It is easy to see that, if $b=a^\dagger$, then $\varphi_n=\Psi_n$, so that biorthogonality is replaced by a simpler orthonormality. Moreover, the relations in (\ref{A3}) collapse, and only one number operator exists, since in this case $N=N^\dagger$.

The analogy with ordinary bosons suggests us to consider the following:

\vspace{2mm}

{\bf Assumption $\D$-pb 3.--}  $\F_\varphi$ is a basis for $\Hil$.

\vspace{1mm}

This is equivalent to requiring that $\F_\Psi$ is a basis for $\Hil$ as well, \cite{chri}. However, several  physical models show that $\F_\varphi$ is {\bf not} always a basis for $\Hil$, but it is still complete in $\Hil$: if $f\in\Hil$ is orthogonal to $\varphi_n$, for all $n$, then $f=0$. For this reason we adopt the following weaker version of  Assumption $\D$-pb 3, \cite{baginbagbook}:

\vspace{2mm}

{\bf Assumption $\D$-pbw 3.--}  For some subspace $\G$ dense in $\Hil$, $\F_\varphi$ and $\F_\Psi$ are $\G$-quasi bases.

\vspace{2mm}
This means that, for all $f$ and $g$ in $\G$,
\be
\left<f,g\right>=\sum_{n\geq0}\left<f,\varphi_n\right>\left<\Psi_n,g\right>=\sum_{n\geq0}\left<f,\Psi_n\right>\left<\varphi_n,g\right>,
\label{A4b}
\en
which can be seen as a weak form of the resolution of the identity, restricted to $\G$. Of course, if $f\in\G$ is orthogonal to all the $\varphi_n$'s, or to all the $\Psi_n$'s, then (\ref{A4b}) implies that $f=0$. Hence $\F_\varphi$ and $\F_\Psi$ are complete in $\G$, \cite{bagbell}.

The families $\F_\varphi$ and $\F_\Psi$ can be used to define two densely defined operators $S_\varphi$ and $S_\Psi$ via their
action respectively on  $\F_\Psi$ and $\F_\varphi$: \be
S_\varphi\Psi_{ n}=\varphi_{ n},\qquad
S_\Psi\varphi_{ n}=\Psi_{\bf n}, \label{213}\en for all $ n$. These operators play a very import role in the analysis of pseudo-bosons, since they map $\F_\varphi$ into $\F_\Psi$ and vice-versa, and define new scalar products in $\Hil$ is terms of which, for instance, the (new) adjoint of $b$ turns out to coincide with $a$. These, and many other aspects which are not relevant here, can be found in \cite{bag2017,bag2020JPA,baginbagbook}. The role of operators like the $S_\varphi$ and $S_\Psi$ in (\ref{213}) in different, but related situations is discussed, for instance, in \cite{benbook,mosta}.

\subsection{Bi-coherent states}\label{sectbcs}

Concerning coherent states, rather than considering the strict $\D$-pseudo bosonic situation described before, where $a$, $b$, $\varphi_n$ and $\Psi_n$ satisfy the ladder equations in (\ref{A3}), we give here a more general result which is again based on certain biorthogonal sets, and some ladder operators, but not necessarily pseudo-bosonic. In particular, as we will show in a moment, our results here hold when replacing $\sqrt{n}$ in the second and in the fourth rows in (\ref{A3}) with some $\alpha_n$, see below. This approach is convenient since bi-coherent states for pseudo-bosons can be seen as a special case of a much more general situation, described by (\ref{20}) below and by Theorem \ref{theo1}. 

Let us consider two biorthogonal families of vectors, $\F_{\tilde\varphi}=\{\tilde\varphi_n\in\Hil, \, n\geq0\}$ and $\F_{\tilde\Psi}=\{\tilde\Psi_n\in\Hil, \, n\geq0\}$ which are $\G$
-quasi bases for some dense subset of $\Hil$, see (\ref{A4b}). Consider an increasing sequence of real numbers $\alpha_n$ satisfying the  inequalities $0=\alpha_0<\alpha_1<\alpha_2<\ldots$. We call $\overline\alpha$ the limit of $\alpha_n$ for $n$ diverging, which coincides with $\sup_n\alpha_n$. We further consider two operators, $A$ and $B^\dagger$, which act as lowering operators respectively on $\F_{\tilde\varphi}$ and $\F_{\tilde\Psi}$ in the following way:
\be
A\,\tilde\varphi_n=\alpha_n\tilde\varphi_{n-1}, \qquad B^\dagger\,\tilde\Psi_n=\alpha_n\tilde\Psi_{n-1},
\label{20}\en
for all $n\geq1$, with $A\,\tilde\varphi_0=B^\dagger\,\tilde\Psi_0=0$. These are the lowering equations which replace those in (\ref{A3}). Then the following theorem holds, \cite{bagproc}:

\begin{thm}\label{theo1}
	Assume that four strictly positive constants $A_\varphi$, $A_\Psi$, $r_\varphi$ and $r_\Psi$ exist, together with two strictly positive sequences $M_n(\varphi)$ and $M_n(\Psi)$, for which
	\be
	\lim_{n\rightarrow\infty}\frac{M_n(\varphi)}{M_{n+1}(\varphi)}=M(\varphi), \qquad \lim_{n\rightarrow\infty}\frac{M_n(\Psi)}{M_{n+1}(\Psi)}=M(\Psi),
	\label{21}\en
	where $M(\varphi)$ and $M(\Psi)$ could be infinity, and such that, for all $n\geq0$,
	\be
	\|\tilde\varphi_n\|\leq A_\varphi\,r_\varphi^n M_n(\varphi), \qquad \|\tilde\Psi_n\|\leq A_\Psi\,r_\Psi^n M_n(\Psi).
	\label{22}\en
	Then, putting $\alpha_0!=1$ and $\alpha_k!=\alpha_1\alpha_2\cdots\alpha_k$, $k\geq1$, the following series:
	\be
	N(|z|)=\left(\sum_{k=0}^\infty\frac{|z|^{2k}}{(\alpha_k!)^2}\right)^{-1/2},
	\label{23}\en
	\be
	\varphi(z)=N(|z|)\sum_{k=0}^\infty\frac{z^k}{\alpha_k!}\tilde\varphi_k,\qquad \Psi(z)=N(|z|)\sum_{k=0}^\infty\frac{z^k}{\alpha_k!}\tilde\Psi_k,
	\label{24}\en
	are all convergent inside the circle $C_\rho(0)$ in $\mathbb{C}$ centered in the origin of the complex plane and of radius $\rho=\overline\alpha\,\min\left(1,\frac{M(\varphi)}{r_\varphi},\frac{M(\Psi)}{r_\Psi}\right)$. Moreover, for all $z\in C_\rho(0)$,
	\be
	A\varphi(z)=z\varphi(z), \qquad B^\dagger \Psi(z)=z\Psi(z).
	\label{25}\en
	Suppose further that a measure $d\lambda(r)$ does exist such that
	\be
	\int_0^\rho d\lambda(r) \,r^{2k}=\frac{(\alpha_k!)^2}{2\pi},
	\label{26}\en
	for all $k\geq0$. Then, putting $z=re^{i\theta}$ and calling $d\nu(z,\overline z)=N(r)^{-2}d\lambda(r)d\theta$, we have
	\be
	\int_{C_\rho(0)}\left<f,\Psi(z)\right>\left<\varphi(z),g\right>d\nu(z,\overline z)=
	\int_{C_\rho(0)}\left<f,\varphi(z)\right>\left<\Psi(z),g\right>d\nu(z,\overline z)=
	\left<f,g\right>,
	\label{27}\en
	for all $f,g\in\G$.
	
\end{thm}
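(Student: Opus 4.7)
The theorem has three distinct claims and I would verify them in the stated order: convergence of the series in (\ref{23})--(\ref{24}) on $C_\rho(0)$, the eigenvalue equations (\ref{25}), and the resolution of the identity (\ref{27}).

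For convergence, I would apply the ratio test to each series. The series in (\ref{23}) defining $N(|z|)^{-2}$ has consecutive-term ratio $|z|^2/\alpha_{k+1}^2\to|z|^2/\overline\alpha^{2}$, hence converges whenever $|z|<\overline\alpha$. For $\varphi(z)$, the bound (\ref{22}) dominates the norm of the $k$-th summand by $N(|z|)A_\varphi|z|^k r_\varphi^k M_k(\varphi)/\alpha_k!$, and the ratio of successive majorants tends to $|z|r_\varphi/(\overline\alpha\, M(\varphi))$ by (\ref{21}); hence $\varphi(z)$ converges in norm as soon as $|z|<\overline\alpha\, M(\varphi)/r_\varphi$. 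The identical argument handles $\Psi(z)$, giving $|z|<\overline\alpha\, M(\Psi)/r_\Psi$. Intersecting the three conditions yields exactly $|z|<\rho$, and an infinite limit $M(\varphi)$ or $M(\Psi)$ merely deletes the corresponding constraint.

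The eigenvalue equations follow by acting with $A$ term by term in (\ref{24}): the $k=0$ summand is annihilated by (\ref{20}), and each subsequent summand yields $\alpha_k\tilde\varphi_{k-1}$, so after the index shift $k\mapsto k+1$ the series becomes $z\,\varphi(z)$. The norm convergence of both sides, together with closedness of $A$ on the closed linear span of $\F_{\tilde\varphi}$, makes this rigorous. The companion identity $B^\dagger\Psi(z)=z\Psi(z)$ is obtained by the same computation with $(A,\tilde\varphi_n,\varphi)$ replaced by $(B^\dagger,\tilde\Psi_n,\Psi)$.

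For the resolution of the identity I would fix $f,g\in\G$ and, with $z=re^{i\theta}$, expand
\be
\left<f,\Psi(z)\right>\left<\varphi(z),g\right>=N(r)^2\sum_{k,m\geq0}\frac{r^{k+m}e^{i(k-m)\theta}}{\alpha_k!\,\alpha_m!}\left<f,\tilde\Psi_k\right>\left<\tilde\varphi_m,g\right>,
\en
so that the $N(r)^2$ coming from the two normalization factors in (\ref{24}) is cancelled exactly by the $N(r)^{-2}$ in $d\nu(z,\overline z)$. The angular integration yields $2\pi\,\delta_{k,m}$, collapsing the double sum to a diagonal one; the radial integral then reduces to $(\alpha_k!)^2/(2\pi)$ by (\ref{26}), cancelling the factor $1/(\alpha_k!)^2$. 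What remains is $\sum_{k\geq0}\left<f,\tilde\Psi_k\right>\left<\tilde\varphi_k,g\right>=\left<f,g\right>$ by the $\G$-quasi-basis property (\ref{A4b}); the second equality in (\ref{27}) follows by swapping the roles of $\F_{\tilde\varphi}$ and $\F_{\tilde\Psi}$. I expect the main obstacle to be the Fubini justification for the interchange of sum and integral: using the Cauchy--Schwarz bounds $|\left<f,\tilde\Psi_k\right>|\le\|f\|\,A_\Psi r_\Psi^k M_k(\Psi)$ and $|\left<\tilde\varphi_m,g\right>|\le\|g\|\,A_\varphi r_\varphi^m M_m(\varphi)$ combined with (\ref{21})--(\ref{22}), one must check absolute integrability of the double series on $C_\rho(0)$, and this is precisely the role that the growth hypotheses on $M_n(\varphi)$ and $M_n(\Psi)$ are designed to play.
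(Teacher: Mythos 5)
Your proposal is correct, but note that the paper does not actually prove Theorem \ref{theo1}: it quotes it from the reference \cite{bagproc}, so there is no internal proof to compare against. Your argument (ratio test on the majorizing series via (\ref{21})--(\ref{22}), term-by-term action of the lowering operators with an index shift, and the angular-then-radial integration collapsing the double sum onto the diagonal where the $\G$-quasi-basis identity (\ref{A4b}) finishes the job) is exactly the standard route for results of this type, and you rightly flag the sum--integral interchange as the one genuinely delicate step.
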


Some comments are in order: first we observe that, contrarily to what happens for ordinary coherent states, \cite{aagbook,didier,gazeaubook},  the norms of the vectors $\tilde\varphi_n$ and $\tilde\Psi_n$ need not being uniformly bounded, here. On the contrary, they can diverge rather fast with $n$. Of course, this is reflected by the fact that bi-coherent states of this kind only exist inside $C_\rho(0)$. Secondly, we are here fully working in an Hilbert space. This is clear because, for each fixed $n$, the norms of both $\tilde\varphi_n$ and $\tilde\Psi_n$ are bounded, as well as the norms in $\Hil$  of $\varphi(z)$ and $\Psi(z)$. However, and this is part of our new results in this paper, a similar strategy can be extended also to the case in which $\tilde\varphi_n$ or $\tilde\Psi_n$, or both, do not belong to $\Hil$, so that $\|\tilde\varphi_n\|=\|\tilde\Psi_n\|=\infty$, for all (or some) $n$. The analysis of this situation, motivated by what discussed in Section \ref{sectIII}, is contained in Section \ref{sectweakbcs}.
Finally, if $A$ and $B$ satisfy Definition \ref{def21}, then $\alpha_n=\sqrt{n}$ and $\overline\alpha=\rho=\infty$. Hence we have convergence of the series above in all the complex plane, and the measure $d\lambda(r)$ can be easily identified. We will return on this aspect later.

\section{A class of choices}\label{sectIII}

The Hilbert space we work with, here, is $\Hil=\Lc^2(\Bbb R)$ with the usual scalar product, and we consider, \cite{bag2010jmp,bag2011,bag2020}, the operators
\be
A=\frac{d}{dx}+w_A(x), \qquad B=-\frac{d}{dx}+w_B(x).
\label{31}\en
where $w_A(x)$ and $w_B(x)$ are two $C^\infty$ functions such that $w_A(x)\neq \overline{w_B(x)}$, in order for $B^\dagger$ to be different from $A$. The reason for asking this regularity to $w_A(x)$ and $w_B(x)$ is because it will be used in the following, in deriving some useful result. However, most of the times, one could require much less regularity, as one can see in what follows. Adopting the terminology used in \cite{bag2020}, we call these functions superpotentials, the reason being that they can be used, in general to construct two supersymmetric Hamiltonians $H_1=BA$ and $H_2=AB$ whose eigenvectors are related as in usual supersymmetric quantum mechanics, \cite{CKS,jun,bag2020}. The fact that $H_1$ and $H_2$ are supersymmetric partners was considered in details in \cite{bag2020}. In this paper this aspect is not particularly relevant, since with our constraint on the superpotentials, we have $H_2=H_1+\1$. This is due to the commutation rule $[A,B]=\1$ which we are going to assume on $A$ and $B$, following Definition \ref{def21}. Moreover, as we will show in a moment, these two superpotentials must be connected. Now, since
\be
H_1=BA=-\frac{d^2}{dx^2}+q_1(x)\frac{d}{dx}+V_1(x), \qquad H_2=AB=-\frac{d^2}{dx^2}+q_1(x)\frac{d}{dx}+V_2(x),
\label{32}\en
where
\be
q_1(x)=w_B(x)-w_A(x), \quad V_1(x)=w_A(x)w_B(x)-w'_A(x), \quad V_2(x)=w_A(x)w_B(x)+w'_B(x),
\label{33}\en  
we find that
\be
[A,B]=H_2-H_1=V_2(x)-V_1(x)=w_A'(x)+w_B'(x).
\label{34}\en
Hence, to have $[A,B]=\1$, which is the case which is interesting for us here, we must have $w_A(x)+w_B(x)=x+k$, for a generic $k$ which we take to be real from now on: $k\in\mathbb{R}$. This does not prevent the superpotentials to be complex valued, in general. We will call these functions {\em pseudo-bosonic superpotentials} (PBSs). It is worth noticing that the commutators here must be understood in the sense of Definition \ref{def21}, since they involve unbounded operators for which commutators is mathematically a {\em risky operation}. 

It is clear that the vacua of $A$ and $B^\dagger$, see Section \ref{sectpbs}, are the following:
\be
\varphi_0(x)=N_{\varphi}\,\exp\left\{-s_A(x)\right\}, \qquad \Psi_0(x)=N_{\Psi}\,\exp\left\{-\overline{s_B(x)}\right\},
\label{35}\en
where 
$s_A(x)=\int w_A(x)dx$ and $s_B(x)=\int w_B(x)dx$, and
$N_{\varphi}$ and $N_{\Psi}$ are two normalization constants which we will compute soon. Indeed a direct check shows that $A\varphi_0(x)=B^\dagger\Psi_0(x)=0$.

It is useful to stress once more that formulas in (\ref{31}) somehow extend the standard definitions in SUSY quantum mechanics, see \cite{CKS,jun} for instance, where a single superpotential $w(x)$ is introduced, and found solving a Riccati equation, in the attempt to factorize a given, self-adjoint, Hamiltonian. As already observed, while this aspect can be useful in general, when $[A,B]$ has not a simple expression, it is not so relevant here exactly because this commutator is simple and, as such, informations on $H_1$ can be easily translated to analogous informations on $H_2$.

What is interesting for us is to extend now our previous analysis in \cite{bag2010jmp,bag2011,bag2020}. In particular, we are not interested in requiring anymore that  $\varphi_{0}(x)$ and $\Psi_0(x)$ are both in $\Lc^2(\mathbb{R})$ or, even more, in some dense subspace $\D$ of $\Lc^2(\mathbb{R})$, as Assumptions $\D$-pb 1 and $\D$-pb2  in Section \ref{sectpbs}. We rather want to see how much of the results in Section \ref{sectpbs} can be deduced also weakening these requirements. This is exactly on the same line of our recent results on weak pseudo-bosons, \cite{bag2020JPA}, where Hilbert spaces were not so important and were replaced by other functional spaces, as those of distributions $\D'(\mathbb{R})$ or of tempered distributions $\Sc'(\mathbb{R})$. To be concrete, in this paper we will not assume, except when explicitly stated, that $\varphi_{0}(x), \Psi_0(x)\in\Lc^2(\mathbb{R})$. Nevertheless, we have the following result:

\begin{lemma}\label{lemma1}
	If $w_A(x)$ and $w_B(x)$ are $C^\infty$ PBSs, then $\varphi_{0}(x)\, \overline{\Psi_0(x)}\in\Lc^1(\mathbb{R})$. Moreover, if $N_\varphi\overline{N_\Psi}=\frac{e^{-k^2/4}}{\sqrt{2\pi}}$, then $\langle\Psi_0,\varphi_0\rangle=1$.
\end{lemma}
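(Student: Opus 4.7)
My plan is to proceed by direct computation, reducing the product $\varphi_0(x)\overline{\Psi_0(x)}$ to a translated Gaussian via the constraint on the superpotentials. The two formulas in (\ref{35}) give
\[
\varphi_0(x)\,\overline{\Psi_0(x)} = N_\varphi\,\overline{N_\Psi}\,\exp\bigl\{-\bigl(s_A(x)+s_B(x)\bigr)\bigr\},
\]
so the whole problem reduces to identifying the sum $s_A+s_B$. Since $s_A$ and $s_B$ are antiderivatives of $w_A$ and $w_B$, and we have already derived $w_A(x)+w_B(x)=x+k$, differentiating $s_A+s_B$ shows that it equals $\tfrac{x^2}{2}+kx$ up to an additive constant. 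That constant is harmless: it multiplies the product by a factor which can (and must) be absorbed into $N_\varphi\overline{N_\Psi}$, so we may assume the antiderivatives are chosen with $s_A(x)+s_B(x)=\tfrac{x^2}{2}+kx$ exactly.

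Next I would complete the square, writing $-\tfrac{x^2}{2}-kx=-\tfrac{1}{2}(x+k)^2+\tfrac{k^2}{2}$, which turns the product into a (complex) constant times $\exp\{-\tfrac{1}{2}(x+k)^2\}$. This is a shifted Gaussian of variance $1$, hence manifestly integrable on $\mathbb{R}$, which settles the first claim $\varphi_0\,\overline{\Psi_0}\in\Lc^1(\mathbb{R})$.

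For the second claim, I would just evaluate the standard Gaussian integral, $\int_{\mathbb{R}}e^{-(x+k)^2/2}\,dx=\sqrt{2\pi}$, so that
\[
\langle\Psi_0,\varphi_0\rangle = \int_{\mathbb{R}}\overline{\Psi_0(x)}\,\varphi_0(x)\,dx = N_\varphi\,\overline{N_\Psi}\,e^{k^2/2}\sqrt{2\pi},
\]
and then impose $\langle\Psi_0,\varphi_0\rangle=1$ to solve for $N_\varphi\overline{N_\Psi}$. The main subtlety, and essentially the only one, is the bookkeeping of integration constants when passing from $w_A+w_B$ to $s_A+s_B$: one needs to be explicit that any such constant is absorbed into the two (otherwise free) normalization factors, so the statement of the lemma fixes only their product, not the individual $N_\varphi$ and $N_\Psi$. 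Beyond that, the result is a one-line consequence of the Gaussian identity, with no convergence or regularity issues arising — the $C^\infty$ hypothesis on the superpotentials guarantees that $s_A$ and $s_B$ are well-defined everywhere, but for this particular statement only their sum matters.
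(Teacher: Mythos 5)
Your argument follows the paper's proof essentially line by line: the same reduction of $\varphi_0(x)\overline{\Psi_0(x)}$ to $N_\varphi\overline{N_\Psi}\,e^{-(s_A(x)+s_B(x))}$, the same use of $w_A+w_B=x+k$ to identify $s_A+s_B=\tfrac{x^2}{2}+kx$ up to an additive constant absorbed into the normalizations, and the same Gaussian observation for the $\Lc^1$ claim. The paper leaves the second claim as ``an explicit simple computation,'' and you have simply carried that computation out; your handling of the integration constant is exactly the paper's choice $q=0$.

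There is, however, one concrete point you should not gloss over. Your own (correct) evaluation gives
\[
\langle\Psi_0,\varphi_0\rangle=N_\varphi\overline{N_\Psi}\int_{\mathbb{R}}e^{-\frac{x^2}{2}-kx}\,dx
=N_\varphi\overline{N_\Psi}\,\sqrt{2\pi}\,e^{k^2/2},
\]
so imposing $\langle\Psi_0,\varphi_0\rangle=1$ yields $N_\varphi\overline{N_\Psi}=\frac{e^{-k^2/2}}{\sqrt{2\pi}}$, whereas the lemma (and equation (\ref{35bis})) states $\frac{e^{-k^2/4}}{\sqrt{2\pi}}$. You write ``impose $\langle\Psi_0,\varphi_0\rangle=1$ to solve for $N_\varphi\overline{N_\Psi}$'' without confronting the value actually stated in the lemma; as written, your proof establishes a different constant from the one claimed. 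The same $e^{k^2/2}$ factor reappears if one checks $\langle\Psi_n,\varphi_n\rangle$ via the Hermite orthogonality integral in Proposition \ref{prop1}, so the exponent $-k^2/4$ in the statement appears to be a misprint for $-k^2/2$. Either way, you must say explicitly whether you are proving the statement as written or a corrected version of it; silently solving for the constant hides the discrepancy.
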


\begin{proof}
	Indeed we have, recalling that $w_A(x)+w_B(x)=x+k$,
	$$
	\varphi_{0}(x)\, \overline{\Psi_0(x)}=N_\varphi\overline{N_\Psi}e^{-(s_A(x)+s_B(x))}=N_\varphi\overline{N_\Psi}e^{-\frac{x^2}{2}-kx-q},
	$$
	where $q$ is an integration constant which, from now on, we put equal to zero, reabsorbing its effect in the normalization constants $N_\varphi$ and $N_\Psi$. Hence $\varphi_{0}(x)\, \overline{\Psi_0(x)}$ is a gaussian and, as such, is in $\Lc^1(\mathbb{R})$. The rest of the Lemma follows from an explicit simple computation.	
	
\end{proof}

 The result of this Lemma depicts what happens, in general, for the so-called PIP-spaces\footnote{Here PIP stands for partial inner product.}, \cite{anttra}, where the scalar product in $\Lc^2(\mathbb{R})$ is extended to a sesquilinear form over {\em compatible spaces}: roughly speaking, $V_1$ and $V_2$ are compatible in $\Hil$ if, taken $f_1\in V_1$ and $f_2\in V_2$,  a sesquilinear form $\langle f_1,f_2\rangle$ which extends the ordinary scalar product on $\Hil$, which we also indicate as $\langle.,.\rangle$, makes sense.  Then Lemma \ref{lemma1} states that $\varphi_0(x)$ and $\Psi_0(x)$ belong to compatible spaces, for all possible choices of PBSs.
 
 \vspace{2mm}
 
 {\bf Remarks:--} (1) In Lemma \ref{lemma1} it is not important that $w_A(x)$ and $w_B(x)$ are $C^\infty$ functions. It is sufficient that they can be integrated, so that $s_A(x)$ and $s_B(x)$ both exist. We prefer to introduce this assumption already here because it will be quite relevant in the following.
 
 (2) According to the Lemma, from now on we will assume that
 \be N_\varphi\overline{N_\Psi}=\frac{e^{-k^2/4}}{\sqrt{2\pi}},\label{35bis}\en
 in order to have $\langle\Psi_0,\varphi_0\rangle=1$. Moreover, with our choice of $q$ in the proof of Lemma \ref{lemma1}, we fix
 \be s_A(x)+s_B(x)=\frac{x^2}{2}+kx,
 \label{35tris}\en
 all throughout the paper.
 \vspace{2mm}

 Let us now introduce, in analogy with (\ref{A2}), the following vectors: 
 	\be
 \varphi_n(x)=\frac{1}{\sqrt{n!}}B^n\varphi_0(x), \qquad \psi_n(x)=\frac{1}{\sqrt{n!}}{A^\dagger}^n\psi_0(x),
 \label{36}\en
 $n=1,2,3,\ldots$. It is clear that, since there is no guarantee that $\varphi_0(x)$ and $\Psi_0(x)$ are square-integrable, we are not in the situation described in Section \ref{sectpbs}: $\varphi_0(x)$ and $\Psi_0(x)$ are not, except that for special choices of PBSs, in any suitable $\D$ as required in Assumptions $\D$-pb 1. and $\D$-pb 2. Nevertheless, $\varphi_n(x)$ and $\Psi_n(x)$ can be computed and turn out to be $C^\infty$ functions for all possible choices of $C^\infty$ PBSs. This is the content of the following Proposition, which generalizes a similar result deduced in \cite{bag2020} for specific forms of $w_A(x)$ and $w_B(x)$, and in which the role of $\Lc^2(\mathbb{R})$ was essential.

\begin{prop}
	For any choice of $C^\infty$ PBSs we have
	\be
	\frac{\varphi_n(x)}{\varphi_0(x)}=\frac{\psi_n(x)}{\psi_0(x)}=p_n(x,k),
	\label{37}\en
	for all $n\geq0$, where $p_n(x,k)$ is independent of $w_A(x)$ and $w_B(x)$ and is defined recursively as follows:
	\be
	p_0(x,k)=1, \qquad p_{n}(x,k)=\frac{1}{\sqrt{n}}\left(p_{n-1}(x,k)(x+k)-p_{n-1}'(x,k)\right),
	\label{38}\en
	$n\geq1$. Moreover, for all $n\geq0$,
	\be
	p_n(x,k)=\frac{1}{\sqrt{2^n \,n!}}\,H_n\left(\frac{x+k}{\sqrt{2}}\right).
	\label{39}\en
	
\end{prop}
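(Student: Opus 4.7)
The plan is to prove both equalities in (\ref{37}) by a single induction on $n$, showing that $q_n(x):=\varphi_n(x)/\varphi_0(x)$ and $r_n(x):=\psi_n(x)/\psi_0(x)$ satisfy the very same recursion (\ref{38}) with the same initial value $p_0=1$, and then to verify (\ref{39}) by checking the Hermite recursion.

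First I would compute the action of $B$ on a product $q\,\varphi_0$, where $q$ is any $C^\infty$ function. Since $B=-\frac{d}{dx}+w_B$ and $\varphi_0'=-w_A\,\varphi_0$, the product rule gives
\[
B(q\,\varphi_0)=\bigl(-q'-q\,\varphi_0'/\varphi_0+w_B\,q\bigr)\varphi_0=\bigl((w_A+w_B)q-q'\bigr)\varphi_0.
\]
Invoking the PBS constraint $w_A(x)+w_B(x)=x+k$, this collapses to $B(q\,\varphi_0)=\bigl((x+k)q-q'\bigr)\varphi_0$, an operation that is intrinsic (no dependence on the individual superpotentials survives). Starting from $\varphi_n=\frac{1}{\sqrt{n!}}B^n\varphi_0$ and writing $\varphi_n=q_n\varphi_0$, the identity $\varphi_{n+1}=\frac{1}{\sqrt{n+1}}B\varphi_n$ then yields precisely $q_{n+1}=\frac{1}{\sqrt{n+1}}\bigl((x+k)q_n-q_n'\bigr)$, which is exactly (\ref{38}); together with $q_0=1=p_0$ this proves $q_n=p_n$ by induction.

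Next I would run the mirror computation on the $\psi$-side. The formal adjoint is $A^\dagger=-\frac{d}{dx}+\overline{w_A}$ and the vacuum satisfies $\psi_0'=-\overline{w_B}\,\psi_0$. The same product-rule argument with $r_n:=\psi_n/\psi_0$ gives
\[
A^\dagger(r\,\psi_0)=\bigl((\overline{w_A}+\overline{w_B})r-r'\bigr)\psi_0=\bigl(\overline{x+k}\,r-r'\bigr)\psi_0.
\]
Here is the one spot that requires a small remark: because we have fixed $k\in\mathbb{R}$ (and $x$ is real), $\overline{x+k}=x+k$, so $r_n$ obeys the same recursion (\ref{38}) and $r_0=1$. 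Thus $r_n=p_n=q_n$, giving both equalities of (\ref{37}). The minor subtlety to flag is that the complex conjugation on the superpotentials washes out only thanks to the reality of $k$; without that (and the fact that we evaluate at real $x$) the two ratios would not coincide.

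Finally, for (\ref{39}), I would argue by induction using the physicists' Hermite recursion $H_{n+1}(y)=2y\,H_n(y)-H_n'(y)$. Setting $y=(x+k)/\sqrt{2}$, so that $\frac{d}{dx}=\frac{1}{\sqrt{2}}\frac{d}{dy}$, and denoting $\tilde p_n(x,k):=\frac{1}{\sqrt{2^n n!}}H_n(y)$, a direct substitution in the right-hand side of (\ref{38}) gives
\[
\frac{1}{\sqrt{n+1}}\bigl((x+k)\tilde p_n-\tilde p_n'\bigr)=\frac{1}{\sqrt{n+1}\sqrt{2^{n+1}n!}}\bigl(2y\,H_n(y)-H_n'(y)\bigr)=\frac{H_{n+1}(y)}{\sqrt{2^{n+1}(n+1)!}}=\tilde p_{n+1},
\]
and $\tilde p_0=H_0(y)=1=p_0$. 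By uniqueness of solutions to the recursion, $p_n=\tilde p_n$, which is (\ref{39}). The whole argument is essentially algebraic, once the $w_A+w_B=x+k$ identity is used to eliminate the superpotentials from the recursion; the main conceptual point (rather than any technical obstacle) is realising that the ratio $\varphi_n/\varphi_0$ is universal across all admissible PBSs.
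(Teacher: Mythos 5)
Your proof is correct, and its core --- the induction showing that $B(q\,\varphi_0)=\bigl((x+k)q-q'\bigr)\varphi_0$ once $\varphi_0'=-w_A\varphi_0$ and the PBS constraint $w_A+w_B=x+k$ are invoked --- is exactly the paper's argument for (\ref{37}). You depart from the paper in two minor but worthwhile ways. First, the paper dismisses the $\Psi$-side as ``completely analogous''; you actually carry it out and correctly isolate the one point where it is not literally identical, namely that $\overline{w_A}+\overline{w_B}=\overline{x+k}$ collapses to $x+k$ only because $k$ (and $x$) are real --- a hypothesis the paper fixes earlier but never flags as being used at this step. Second, for (\ref{39}) the paper goes through the Rodrigues formula, first checking by a separate induction that $p_n(x,k)=(-1)^n\frac{1}{\sqrt{n!}}\,e^{x^2/2+kx}\frac{d^n}{dx^n}e^{-x^2/2-kx}$; you instead substitute the candidate $\tilde p_n$ directly into the recursion (\ref{38}) and invoke the identity $H_{n+1}(y)=2yH_n(y)-H_n'(y)$. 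Your route is shorter and avoids the auxiliary induction, at the cost of quoting that Hermite identity; the two verifications are of course equivalent, since the identity is itself an immediate consequence of the Rodrigues formula. I see no gaps.
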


\begin{proof}
We start proving (\ref{37}) for the $\varphi_n(x)$. The proof for the $\Psi_n(x)$ is completely analogous.

We use induction on $n$. It is clear that (\ref{37}) is true if $n=0$, since $p_0(x,k)=1$. Suppose now that the statement is true for $n-1$, so that
$
\varphi_{n-1}(x)=p_{n-1}(x,k)\,\varphi_0(x),
$
with $p_{n-1}(x,k)$ satisfying (\ref{38}), for $n$ replaced by $n-1$. Let us then compute $\varphi_n(x)$. From (\ref{36}) we have
$$
\sqrt{n}\,\varphi_n(x)=B\varphi_{n-1}(x)=\left(-\frac{d}{dx}+w_B(x)\right)\left(p_{n-1}(x,k)\,\varphi_0(x)\right).
$$
Now, recalling that $\frac{d}{dx}\varphi_0(x)=-w_A(x)\varphi_0(x)$ and that $w_A(x)+w_B(x)=x+k$, we get the following:
$$
\sqrt{n}\,\varphi_n(x)=\left(-p_{n-1}'(x,k)+(x+k)p_{n-1}(x,k)\right)\varphi_0(x),
$$
from which our first claim easily follows. Equality (\ref{39}) is a consequence of the Rodriguez formula for the Hermite polynomials,
$$
H_n(y)=(-1)^ne^{y^2}\frac{d^n}{dy^n}\,e^{-y^2},
$$
and from the fact that, again using induction on $n$, we can check that
$$
p_n(x,k)=(-1)^n\frac{1}{\sqrt{n!}}\,e^{x^2/2+kx}\frac{d^n}{dx^n}\,e^{-x^2/2-kx}.
$$

\end{proof}

Summarizing we have
\be
\varphi_n(x)=\frac{N_\varphi}{\sqrt{2^n\,n!}}\,H_n\left(\frac{x+k}{\sqrt{2}}\right)\,e^{-s_A(x)},
\label{310}\en
and
\be
\Psi_n(x)=\frac{N_\Psi}{\sqrt{2^n\,n!}}\,H_n\left(\frac{x+k}{\sqrt{2}}\right)\,e^{-\overline{s_B(x)}}.
\label{311}\en
In analogy with what discussed in Section \ref{sectpbs},  these functions satisfy the following eigenvalue equations:
\be
N\varphi_n(x)=n\varphi_n(x), \qquad N^\dagger\Psi_n(x)=n\Psi_n(x),
\label{311bis}\en
where $N=BA$ and $N^\dagger=A^\dagger B^\dagger$.

It is clear that, without further assumptions on the PBSs, these functions (or part of these functions) could be not square-integrable. However, Lemma \ref{lemma1} can be extended as follows

\begin{prop}\label{prop1}
	If $w_A(x)$ and $w_B(x)$ are $C^\infty$ PBSs, then $\varphi_{n}(x)\, \overline{\Psi_m(x)}\in\Lc^1(\mathbb{R})$ and  $\langle\Psi_m,\varphi_n\rangle=\delta_{n,m}$, for all $n,m\geq0$.
\end{prop}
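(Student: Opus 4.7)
The plan is to substitute the explicit formulas (\ref{310}) and (\ref{311}) into the product $\varphi_n(x)\overline{\Psi_m(x)}$, use the sum identity (\ref{35tris}) on the superpotentials, and then invoke the classical orthogonality relation for the Hermite polynomials. Conceptually the proposition is just the $(n,m)=(0,0)$ case of Lemma \ref{lemma1} amplified by Hermite orthogonality: the PBS-dependence of $\varphi_n$ and $\Psi_m$ cancels in the product for exactly the same reason it cancelled at level zero.

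First I would compute $\varphi_n(x)\overline{\Psi_m(x)}$ directly. Since $(x+k)/\sqrt{2}\in\mathbb{R}$, complex conjugation leaves the Hermite factor invariant, and $\overline{e^{-\overline{s_B(x)}}}=e^{-s_B(x)}$. Invoking (\ref{35tris}) to combine the exponentials, the product reduces to
\begin{equation*}
\varphi_n(x)\overline{\Psi_m(x)} \;=\; \frac{N_\varphi\overline{N_\Psi}}{\sqrt{2^{n+m}\,n!\,m!}}\,H_n\!\left(\tfrac{x+k}{\sqrt{2}}\right)H_m\!\left(\tfrac{x+k}{\sqrt{2}}\right)e^{-x^2/2-kx},
\end{equation*}
which is a polynomial times a Gaussian. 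Its membership in $\Lc^1(\mathbb{R})$ is then immediate, and the outcome is manifestly independent of the particular $C^\infty$ PBSs chosen.

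Then I would compute $\langle\Psi_m,\varphi_n\rangle=\int_{\mathbb{R}}\varphi_n(x)\overline{\Psi_m(x)}\,dx$ via the substitution $y=(x+k)/\sqrt{2}$, which converts $e^{-x^2/2-kx}$ into $e^{k^2/2}e^{-y^2}$ and supplies a Jacobian factor $\sqrt{2}$. The classical identity $\int_{\mathbb{R}} H_n(y)H_m(y)e^{-y^2}\,dy=\sqrt{\pi}\,2^n n!\,\delta_{n,m}$ collapses the integral to a constant multiple of $\delta_{n,m}$, and the normalization fixed in (\ref{35bis}) --- already verified for $n=m=0$ in Lemma \ref{lemma1} --- forces that constant to be $1$. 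No step presents a genuine obstacle: the only care required is the bookkeeping for the complex conjugation in $\overline{s_B(x)}$ and for the constants produced by the change of variable.
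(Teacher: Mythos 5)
Your proposal is correct and follows essentially the same route as the paper's own proof: substitute the explicit expressions (\ref{310})--(\ref{311}), use (\ref{35tris}) to collapse the exponentials into the Gaussian $e^{-x^2/2-kx}$ (giving the $\Lc^1$ claim), then change variables to $y=(x+k)/\sqrt{2}$ and apply the Hermite orthogonality integral together with the normalization (\ref{35bis}). The only addition you make is to spell out the conjugation bookkeeping ($\overline{e^{-\overline{s_B(x)}}}=e^{-s_B(x)}$ and the reality of the Hermite factor), which the paper leaves implicit.
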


\begin{proof}
Indeed we have, in analogy with Lemma \ref{lemma1},
	$$
	\varphi_{n}(x)\, \overline{\Psi_m(x)}=\frac{N_\varphi\,\overline{N_\Psi}}{\sqrt{2^{n+m}\,n!\,m!}}\,H_n\left(\frac{x+k}{\sqrt{2}}\right)H_m\left(\frac{x+k}{\sqrt{2}}\right)\,e^{-\frac{x^2}{2}-kx}
	$$
	Hence, putting $y=\frac{x+k}{\sqrt{2}}$ in $\langle\Psi_m,\varphi_n\rangle$ and using (\ref{35bis}), together with the integral
	$$
	\int_{\Bbb{R}}H_n(x)H_m(x)e^{-x^2}\,dx=\sqrt{\pi}\,2^n\,n! \delta_{n,m},
	$$
	our claim follows.	
	
\end{proof}

Incidentally we observe that this approach significantly simplifies the proof given in \cite{bag2020}, which furthermore was deduced for a very specific choice of PBSs, $w_A(x)=e^x+k$ and $w_B(x)=x-e^x$. Also, it is useful to stress that this Proposition produces, using explicit computations, the result in (\ref{A4}), which however was deduced under much stronger assumptions on $\F_\varphi=\left\{\varphi_n(x),\,n\geq0\right\}$ and $\F_\Psi=\left\{\Psi_n(x),\,n\geq0\right\}$.

Using the same language adopted after Lemma \ref{lemma1}, Proposition \ref{prop1} shows that $\varphi_n(x)$ and $\Psi_m(x)$ belong to compatible spaces so that, even if one of the two functions is not square-integrable, still the ordinary scalar product in $\Lc^2(\mathbb{R})$ can be extended to any product of these functions.

In principle, without extra assumptions on the PBSs, we don't know if the functions of $\F_\varphi$ and $\F_\Psi$ are square-integrable or not. For this reason, it makes not much sense to check if they are biorthogonal bases, or if they are complete, in  $\Lc^2(\mathbb{R})$ or not, see also \cite{bag2020JPA}. However, as in \cite{bag2020JPA}, something can still be deduced. To do so, we start rewriting $\varphi_n(x)$ and $\Psi_n(x)$ as follows:
\be
\varphi_n(x)=c_n(x)\rho_A(x), \qquad \Psi_n(x)=c_n(x)\rho_B(x),
\label{312}\en
where
\be
c_n(x)=\frac{1}{2^{1/4}}\,e_n\left(\frac{x+k}{\sqrt{2}}\right), \qquad \mbox{where }\quad  e_n(x)=\frac{1}{\sqrt{2^nn!\sqrt{\pi}}}H_n(x)\,e^{-\frac{x^2}{2}},
\label{313}\en
is the well known $n$-th eigenstate of the harmonic oscillator, and
\be
 \rho_A(x)=N_\varphi(2\pi)^{1/4}e^{\frac{1}{2}\left(\frac{x+k}{\sqrt{2}}\right)^2}e^{-s_A(x)},\qquad \rho_B(x)=N_\Psi(2\pi)^{1/4}e^{\frac{1}{2}\left(\frac{x+k}{\sqrt{2}}\right)^2}e^{-\overline{s_B(x)}}.
\label{314}\en
Notice that these functions are independent of $n$. Because of the (\ref{35tris}) $\rho_A(x)$ and $\rho_B(x)$ are not mutually independent. Indeed they satisfy
\be
\rho_A(x)\,\overline{\rho_B(x)}=1,
\label{315}\en
so that one is (a part a complex conjugation) the inverse of the other. This is in agreement, of course, with the factorization in (\ref{312}) and with Proposition \ref{prop1}, since the set $\F_c=\{c_n(x), \, n\geq 0\}$ is an o.n. basis for $\Lc^2(\mathbb{R})$. Let us now define the following set:
\be\E=\left\{h(x)\in\Lc^2(\mathbb{R}):\, h(x)\rho_j(x)\in\Lc^2(\mathbb{R}), \, j=A,B\right\}
\label{316}\en
This is clearly a subset of $\Lc^2(\mathbb{R})$, and is dense in it, since it contains the set $\D(\mathbb{R})$ of the $C^\infty$ compactly supported functions, which is dense in $\Lc^2(\mathbb{R})$. The inclusion $\D(\mathbb{R})\subseteq\E$ can be easily checked: take $d(x)\in\D(\mathbb{R})$. Then, since $\rho_A(x)$ and $\rho_B(x)$ are $C^\infty$ functions, $d(x)\rho_j(x)\in\D(\mathbb{R})$, $j=A,B$. Hence $d(x)\in\E$, since $\D(\mathbb{R})\subset\Lc^2(\mathbb{R})$. Now, it is possible to prove, using the same terminology as in (\ref{A4b}), that $(\F_\varphi,\F_\Psi)$ are $\E$-quasi bases. More explicitly, recalling that not all the functions considered here are in $\Lc^2(\mathbb{R})$, we need to check that,
for all $\epsilon(x)\in\E$,  $\langle \epsilon,\varphi_n\rangle$ and $\langle \epsilon,\Psi_n\rangle$ are well defined for all $n\geq0$, and that, for all $f(x),g(x)\in\E$,
 \be
 \left<f,g\right>=\sum_{n\geq0}\langle f,\varphi_n\rangle\langle\Psi_n,g\rangle=\sum_{n\geq0}\left<f,\Psi_n\right>\left<\varphi_n,g\right>,
 \label{317}
 \en
The proof of these claims is based on the definition of $\E$ and on the fact that $\F_c$ is an o.n. basis for $\Lc^2(\mathbb{R})$. Indeed using (\ref{312}) we have, for instance,
$$
\langle \epsilon,\varphi_n\rangle=\int_{\Bbb{R}}\overline{\epsilon(x)}\,\varphi_n(x)\,dx=\int_{\Bbb{R}}\overline{\overline{\rho_A(x)}\,\epsilon(x)}\,c_n(x)\,dx,
$$
which is well defined for all $n$ since $\overline{\rho_A(x)}\,\epsilon(x)\in\Lc^2(\mathbb{R})$. Similarly, $$\langle \epsilon,\Psi_n\rangle=\int_{\Bbb{R}}\overline{\overline{\rho_B(x)}\,\epsilon(x)}\,c_n(x)\,dx,$$ 
which is also well defined for all $n$, since $\overline{\rho_B(x)}\,\epsilon(x)\in\Lc^2(\mathbb{R})$ as well. To check (\ref{317}), we now use completeness of $\F_c$ in $\Lc^2(\mathbb{R})$:
$$
\sum_{n\geq0}\langle f,\varphi_n\rangle\langle\Psi_n,g\rangle=\sum_{n\geq0}\langle \overline{\rho_A}\,f,c_n\rangle\langle c_n,\overline{\rho_B}\,g\rangle=\langle \overline{\rho_A}\,f,\overline{\rho_B}\,g\rangle=\int_{\Bbb{R}}\overline{f(x)}\,g(x)\,\rho_A(x)\overline{\rho_B(x)}=\left<f,g\right>,
$$
because of the (\ref{315}). The other part of (\ref{317}) can be proved similarly. We can conclude that, even if not all the functions considered in our analysis necessarily belong to $\Lc^2(\mathbb{R})$, still we can say that $(\F_\varphi,\F_\Psi)$ are $\E$-quasi bases, even if in a generalized sense. This is because $\E$ is {\em so well behaved} to take care of the bad asymptotic behaviour of the $\varphi_n(x)$ and $\Psi_n(x)$, when needed\footnote{In fact, for special choices of PBSs, it might still happen that these functions are all in $\Lc^2(\mathbb{R})$, \cite{bag2010jmp,bag2011}.In this case, $(\F_\varphi,\F_\Psi)$ are $\E$-quasi bases in the standard sense, \cite{baginbagbook}.}. This is very close to what happens in distributions theory, where the {\em bad analytic properties} of distributions are somehow cured by very well behaved test functions. We will see something similar to this in Section \ref{sectweakbcs}. 

Since $(\F_\varphi,\F_\Psi)$ are $\E$-quasi bases, it easily follows that both $\F_\varphi$ and $\F_\Psi$ are complete in $\E$, in the sense of \cite{bagbell}: if $\epsilon(x)\in\E$ is orthogonal to all the $\varphi_n(x)$, or to all the $\Psi_n(x)$, then $\epsilon(x)=0$ almost everywhere (a.e.)  in $\mathbb{R}$. To check this, it is enough to take $f(x)=g(x)$ in (\ref{317}) and assume that, say, $\langle f,\varphi_n\rangle=0$ for all $n$. Hence $\|f\|^2=0$, so that $f=0$. We recall, see \cite{bagbell}, that this does not imply that $\F_\varphi$ or $\F_\Psi$ are also complete in $\Lc^2(\mathbb{R})$. To check completeness of these sets in $\Lc^2(\mathbb{R})$, or in some even larger set, we recall first the following useful result, \cite{kolfom}.

\begin{lemma}
	 Suppose $\eta(x)$ is a Lebesgue-measurable function which is different from zero a.e. in $\mathbb{R}$. Suppose further that there exist two positive constants $\delta, C$ such that $|\eta(x)|\leq C\,e^{-\delta|x|}$ a.e. in $\mathbb{R}$, then the set $\left\{x^n\,\eta(x)\right\}$ is complete in $\Lc^2(\mathbb{R})$.
\end{lemma}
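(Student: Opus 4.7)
The plan is to show that if $f\in\Lc^2(\mathbb{R})$ satisfies $\langle x^n\eta,f\rangle=0$ for every $n\geq 0$, then $f=0$ a.e.\ in $\mathbb{R}$. Setting $g(x)=f(x)\,\overline{\eta(x)}$, the orthogonality hypothesis becomes $\int_{\mathbb{R}} x^n g(x)\,dx=0$ for all $n\geq 0$. The decay assumption $|\eta(x)|\leq C e^{-\delta|x|}$ combined with Cauchy--Schwarz gives, for any $0\leq\alpha<\delta$,
\be
\int_{\mathbb{R}}|g(x)|\,e^{\alpha|x|}\,dx \leq C\,\|f\|_{2}\left(\int_{\mathbb{R}}e^{-2(\delta-\alpha)|x|}\,dx\right)^{1/2}<\infty,
\label{pL1}\en
so in particular $g\in\Lc^1(\mathbb{R})$ and $g$ has finite moments of every order.

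Next I would introduce the auxiliary function
\be
F(z)=\int_{\mathbb{R}}g(x)\,e^{izx}\,dx,
\label{pFz}\en
and show that $F$ is holomorphic in the open strip $\Omega_\delta=\{z\in\mathbb{C}:|\Im z|<\delta\}$. The bound (\ref{pL1}), applied with $\alpha=|\Im z|$, dominates the integrand uniformly on compact subsets of $\Omega_\delta$; standard arguments (Morera's theorem, or direct differentiation under the integral sign justified by dominated convergence) then give analyticity. Expanding the exponential in a power series inside the strip and interchanging sum and integral via the same dominating bound yields
\be
F(z)=\sum_{n\geq 0}\frac{(iz)^{n}}{n!}\int_{\mathbb{R}}x^{n}g(x)\,dx,
\label{pTaylor}\en
valid for $|z|<\delta$. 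The vanishing of all moments of $g$ thus forces the Taylor coefficients of $F$ at the origin to vanish, and by connectedness $F\equiv 0$ on $\Omega_\delta$.

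In particular $F(t)=0$ for every $t\in\mathbb{R}$, i.e., the Fourier transform of $g\in\Lc^1(\mathbb{R})$ is identically zero. By the uniqueness theorem for the Fourier transform on $\Lc^1$, $g(x)=0$ a.e., which means $f(x)\,\overline{\eta(x)}=0$ a.e. Since $\eta(x)\neq 0$ a.e.\ by hypothesis, we conclude $f(x)=0$ a.e., proving completeness of $\{x^n\eta(x)\}_{n\geq 0}$ in $\Lc^2(\mathbb{R})$.

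The main obstacle is the rigorous passage from the algebraic information ``all moments of $g$ vanish'' to the vanishing of $g$ itself; unlike compactly supported or Schwartz situations, this step requires the exponential decay built into the hypothesis on $\eta$, which is precisely what enables the analytic continuation of $F$ into a strip. Without that decay, the Hamburger moment problem is indeterminate and the argument would collapse; with it, the holomorphic extension and Fourier uniqueness together do the work.
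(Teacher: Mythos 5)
Your proof is correct and complete. Note that the paper itself does not prove this lemma at all: it is quoted as a known result with a citation to Kolmogorov--Fomine, so there is no in-paper argument to compare against. Your route --- reduce to the vanishing of all moments of $g=f\,\overline{\eta}$, use the exponential decay to continue $\widehat{g}$ holomorphically into the strip $|\Im z|<\delta$, kill it by the identity theorem, and finish with $\Lc^1$ Fourier uniqueness --- is precisely the standard textbook proof of this completeness criterion, and every step (the Cauchy--Schwarz bound giving the weighted $\Lc^1$ estimate, the domination justifying the series--integral interchange for $|z|<\delta$, and the final division by $\eta\neq0$ a.e.) is correctly justified.
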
 

Of course, a similar result holds if the monomials $\{x^n\}$ are replaced by other polynomials of the $n$-th degree, like the shifted and dilated version of the Hermite polynomial $H_n\left(\frac{x+k}{\sqrt{2}}\right)$ appearing in (\ref{313}).

In order to apply this Lemma, recalling (\ref{310}) and (\ref{311}) together with (\ref{35tris}), we conclude that $s_A(x)$ or $s_B(x)=\frac{x^2}{2}+kx-s_A(x)$, or both, must diverge to $+\infty$ at least as fast as $|x|$, for $|x|$ diverging. Of course, as also (\ref{315}) clearly shows, if $s_A(x)$ diverges too fast to $+\infty$, then $s_B(x)$ also diverges, but to $-\infty$ so that $e^{-\overline{s_B(x)}}$ cannot satisfy the main assumption of the Lemma above. Stated differently, if we want both $\F_\varphi$ and $\F_\Psi$ to be complete in  $\Lc^2(\mathbb{R})$, the PBSs must be chosen properly. Otherwise, only one of the two sets can be made of square-integrable functions.

Another completeness result follows from the following Theorem, \cite{walter}:

\begin{thm}
	Let $f(x)$ be a tempered distribution, $f(x)\in\Sc'(\mathbb{R})$, such that $\langle f,e_n\rangle=0$, for all $n\geq0$. Then $f(x)=0$.
\end{thm}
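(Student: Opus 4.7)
The proof rests on the classical fact that the Hermite functions $\{e_n\}_{n\geq 0}$ form not only an orthonormal basis of $\Lc^2(\mathbb{R})$ but also a Schauder basis of the Schwartz space $\Sc(\mathbb{R})$ in its Fr\'echet topology: for every $\varphi\in\Sc(\mathbb{R})$ the Hermite expansion
\[
\varphi = \sum_{n\geq 0} \langle e_n,\varphi\rangle\, e_n
\]
converges not merely in $\Lc^2(\mathbb{R})$ but with respect to every Schwartz seminorm. Granting this, the argument is immediate: since $f\in\Sc'(\mathbb{R})$ is continuous on $\Sc(\mathbb{R})$, one may evaluate $f$ term-by-term and obtain
\[
\langle f,\varphi\rangle = \sum_{n\geq 0} \langle e_n,\varphi\rangle\, \langle f,e_n\rangle = 0,
\]
by the hypothesis $\langle f,e_n\rangle=0$ for all $n\geq 0$. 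Since $\varphi\in\Sc(\mathbb{R})$ is arbitrary, $f$ vanishes as a tempered distribution.

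The real content is therefore the convergence in $\Sc(\mathbb{R})$ of the Hermite expansion. The standard argument passes through the harmonic oscillator Hamiltonian $H_{\mathrm{osc}}=-\frac{d^2}{dx^2}+x^2$, whose eigenfunctions are precisely the $e_n$, with eigenvalues $2n+1$. One characterizes $\Sc(\mathbb{R})$ intrinsically as $\bigcap_{k\geq 0} D(H_{\mathrm{osc}}^k)$, equipped with the family of seminorms $\varphi\mapsto \|H_{\mathrm{osc}}^k \varphi\|_{\Lc^2}$, and shows that this topology is equivalent to the usual one given by the seminorms $\sup_{x\in\mathbb{R}}|x^\alpha \partial^\beta \varphi(x)|$. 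In these terms, $\varphi\in\Sc(\mathbb{R})$ if and only if its Hermite coefficients $c_n=\langle e_n,\varphi\rangle$ satisfy $\sum_{n\geq 0} (2n+1)^{2k}|c_n|^2<\infty$ for every $k\geq 0$, and the partial sums $\sum_{n=0}^{N} c_n e_n$ then converge to $\varphi$ with respect to each of these seminorms.

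The principal obstacle is precisely this analytic characterization of $\Sc(\mathbb{R})$ via $H_{\mathrm{osc}}$; however, it is a classical result available in standard references on distribution theory (see e.g.\ Reed--Simon, vol.~II), so for the purposes of the present paper it may simply be invoked. Once it is in hand, the distributional step is automatic, because convergence in the Fr\'echet topology of $\Sc(\mathbb{R})$ is exactly the condition needed to apply a tempered distribution term-by-term to an infinite series.
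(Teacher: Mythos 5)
Your argument is correct, and it is the standard proof of this fact. Note, however, that the paper itself does not prove this theorem at all: it is imported verbatim from the cited reference of Walter, so there is no in-paper argument to compare yours against. You have correctly isolated the one nontrivial ingredient, namely that the Hermite functions $\{e_n\}$ form a Schauder basis of $\Sc(\mathbb{R})$ in its Fr\'echet topology, equivalently that $\varphi\in\Sc(\mathbb{R})$ iff its Hermite coefficients decay faster than any power of $n$ and the partial sums converge in every Schwartz seminorm. This is the $N$-representation theorem (it appears in Reed--Simon, vol.~I, in the appendix to Section V.3, not vol.~II as you state --- a citation slip, not a mathematical one). Granting it, the term-by-term evaluation $\langle f,\varphi\rangle=\sum_{n\geq0}\langle e_n,\varphi\rangle\,\langle f,e_n\rangle=0$ is immediate from the continuity of $f$ on $\Sc(\mathbb{R})$, and since the $e_n$ are real-valued no conjugation subtleties arise in the pairing. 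Your proposal is therefore a legitimate self-contained substitute for the paper's bare citation.
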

Here $e_n(x)$ has been introduced in (\ref{313}). This theorem can be restated by saying that the family $\{e_n(x)\}$ is complete not only in  $\Lc^2(\mathbb{R})$, but also in $\Sc'(\mathbb{R})$. Using this result it is now possible to prove that the sets $\F_\varphi$ and $\F_\Psi$ are complete in  $\hat\E$, defined as
$$
\hat\E=\left\{h(x)\in\Lc^2(\mathbb{R}):\, h(x)\rho_j(x)\in\Sc'(\mathbb{R}), \, j=A,B\right\}
$$
Notice that this set is much larger than $\E$, since $\Sc'(\mathbb{R})$ is obviously larger than $\Lc^2(\mathbb{R})$.

We refer to \cite{bag2010jmp} for more information on the operators in (\ref{31}) for choices of PBSs for which $\varphi_n(x), \Psi_n(x)\in\Lc^2(\mathbb{R})$. In the next section, we will analyse bi-coherent states mostly in the {\em worst case}, i.e. when a fully Hilbert space treatment of the system is not possible. In doing so, we will generalize some recent results on bi-coherent states, see Section \ref{sectbcs}, to their distributional versions.

\section{Weak bi-coherent states}\label{sectweakbcs}

As we have discussed in the previous section, if we don't choose properly the PBSs $w_A(x)$ and $w_B(x)$, we likely find eigenvectors of $N=BA$ ($\varphi_n(x)$, see (\ref{310})) and of $N^\dagger=A^\dagger B^\dagger$ ($\Psi_n(x)$, see (\ref{311})) which are not necessarily in $\Lc^2(\mathbb{R})$. Therefore, even if lowering equations like those in (\ref{20}) are satisfied, there is no reason why these vectors should satisfy the inequalities in (\ref{22}). In fact, it may happen that $\|\varphi_n\|$ or $\|\Psi_n\|$ are infinite. Of course, this is not the case if $\rho_A(x)$ and $\rho_B(x)$ belong to $\Lc^\infty(\mathbb{R})$. In fact, when this is true, from (\ref{312}) we deduce that
$$
\|\varphi_n\|=\|c_n\,\rho_A\|\leq\|\rho_A\|_\infty\|c_n\|=\|\rho_A\|_\infty, \qquad \|\Psi_n\|=\|c_n\,\rho_B\|\leq\|\rho_B\|_\infty,
$$
for all $n\geq0$. Moreover, comparing (\ref{20}) with (\ref{311bis}), we deduce that $\alpha_n=\sqrt{n}$, and therefore $\overline{\alpha}=\infty$. Hence Theorem \ref{theo1} holds with the following choice of the parameters involved: $A_\varphi=\|\rho_A\|_\infty$, $A_\Psi=\|\rho_B\|_\infty$, $r_\varphi=r_\Psi=M_n(\varphi)=M_n(\Psi)=1$, for all $n\geq0$. Hence $\rho=\infty$, and the series in (\ref{23}) and (\ref{24}) converge in all the complex plane. An example in which these conditions on $\rho_A(x)$ and $\rho_B(x)$ are satisfied is when we choose $s_A(x)$ and $s_B(x)$ as follows:
$$
s_A(x)=\frac{x^2}{4}+\frac{kx}{2}+\Phi(x), \qquad s_B(x)=\frac{x^2}{4}+\frac{kx}{2}-\Phi(x), 
$$
where $\Phi(x)$ is any real $C^\infty$ function bounded from below and from above, i.e. when there exist $m,M$ such that $-\infty<m\leq \Phi(x)\leq M<\infty$, a.e. in $\mathbb{R}$. In fact, in this case we can check that
$$
\|\rho_A\|_\infty=N_\varphi (2\pi)^{1/4}e^{k^2/4-m}, \qquad \|\rho_B\|_\infty=N_\Psi (2\pi)^{1/4}e^{k^2/4+M},
$$
while the PBSs are $w_A(x)=\frac{x}{2}+k+\Phi'(x)$ and $w_B(x)=\frac{x}{2}+k-\Phi'(x)$.
It is not difficult to find non trivial choices of $\Phi(x)$: $\cos(x)$ or $\sin(x)$, or some combinations of these, are just some examples. It may be interesting to notice that, if $\Phi(x)=0$, then $\varphi_n(x)$ coincides with $\Psi_n(x)$, except for a normalization constant, since both $\rho_A(x)$ and $\rho_B(x)$ turn out to be constant.

\subsection{An asymmetric example in $\Lc^2(\mathbb{R})$}

Let us take now $s_A(x)=\frac{x^2}{4}$ and $s_B(x)=\frac{x^2}{4}+kx$. Condition (\ref{35tris}) is satisfied and we have
\be
\varphi_n(x)=\frac{N_\varphi}{\sqrt{2^n\,n!}}\,H_n\left(\frac{x+k}{\sqrt{2}}\right)\,e^{-x^2/4},
\qquad
\Psi_n(x)=\frac{N_\Psi}{\sqrt{2^n\,n!}}\,H_n\left(\frac{x+k}{\sqrt{2}}\right)\,e^{-x^2/4-kx}.
\label{318}\en
Both these functions are square-integrable and the norm of $\varphi_n(x)$ can be computed using the following integral, \cite{grad}:
$$
\int_{\mathbb{R}}e^{-x^2}H_m(x+y)\,H_n(x+z)\,dx=2^n\,m!\sqrt{\pi}z^{n-m}L_m^{n-m}(-2yz),
$$
which holds if $m\leq n$. Here $L_m^{n-m}$ is a Laguerre polynomial. We get
\be\|\varphi_n\|^2=|N_\varphi|^2\sqrt{2\pi}\,L_n(-k^2),\label{319}\en
where $L_n(x)=L_n^0(x)$. As for the norm of $\Psi_n(x)$, the same formula above produces
\be\|\Psi_n\|^2=|N_\Psi|^2\sqrt{2\pi}\,e^{2k^2}\,L_n(-k^2),\label{320}\en
which differs from (\ref{319}) only for a constant factor. Now, to check if this example fits the assumptions of Theorem \ref{theo1}, we observe that the asymptotic behaviour (in $n$) of the Laguerre polynomials for negative arguments, as deduced in \cite{szego}, Theorem 8.22.3, is the following:
$$
L_n(x)=\frac{e^{x/2}2^{2(-nx)^{1/2}}}{2\sqrt{\pi}\,(-xn)^{1/4}}\left(1+O(n^{-1/2})\right),
$$
$x<0$, so that
\be
\|\varphi_n\|\simeq\frac{|N_\varphi|}{(2|k|)^{1/4}}\,e^{-k^2/4}\frac{e^{|k|\sqrt{n}}}{n^{1/8}}, \qquad 
\|\Psi_n\|\simeq\frac{|N_\Psi|}{(2|k|)^{1/4}}\,e^{3k^2/4}\frac{e^{|k|\sqrt{n}}}{n^{1/8}}, 
\label{321}\en
where $\simeq$ stands for {\em except for corrections $O(n^{-1/2})$}. Now, since we have clearly $e^{|k|\sqrt{n}}<e^{|k|n}$ for all non zero $k$ and $n\geq1$, we can identify the various ingredients of Theorem \ref{theo1} as follows:
$$
A_\varphi=\frac{|N_\varphi|}{(2|k|)^{1/4}}\,e^{-k^2/4}, \quad A_\Psi=\frac{|N_\Psi|}{(2|k|)^{1/4}}\,e^{3k^2/4}, \quad r_\varphi=r_\Psi=e^{|k|}, \quad M_n(\varphi)=M_n(\Psi)=\frac{1}{n^{1/8}}.
$$
Of course, this is valid if $k\neq0$, which is the only interesting case for us, since if $k=0$ then $s_A(x)=s_B(x)$. Hence $M(\varphi)=M(\Psi)=1$, $\rho=\infty$, and the bi-coherent states in (\ref{24}) are well defined in all the complex plane, belong to $\Lc^2(\mathbb{R})$, and satisfies the eigenvalue equations (\ref{24}) and the weak resolution of the identity (\ref{27}), with $\G=\E$. In fact, taking
\be d\lambda(r)=\frac{1}{\pi}e^{-r^2}r\,dr,
\label{322}\en
condition (\ref{26}) is satisfied:
$$
\int_0^\infty d\lambda(r) r^{2k}=\frac{k!}{2\pi}.
$$
Then (\ref{27}) follows from the fact that, as proved in Section \ref{sectIII}, $(\F_\varphi,\F_\Psi)$ are $\E$-quasi bases for all possible choices of PBSs satisfying (\ref{35tris}). Hence, in particular, they are $\E$-quasi bases for our choice $s_A(x)=\frac{x^2}{4}$ and $s_B(x)=\frac{x^2}{4}+kx$, or $w_A(x)=\frac{x}{2}$ and $w_B(x)=\frac{x}{2}+k$.

\subsection{Working outside $\Lc^2(\mathbb{R})$}\label{bcsexample}

Let us now concentrate on the situation in which $\Lc^2(\mathbb{R})$ is not the natural space where to work. Of course, what we have in mind is that changing the metric does not change much, leaving most of the problems unsolved. For instance, changing the scalar product can produce a convergent series for, say, $\varphi(z)$ but does help much when convergence of $\Psi(z)$ is also considered.  Or vice-versa. In other words, if we introduce a non trivial metric $\eta$ and a related scalar product $\langle.,.\rangle_\eta=\langle.,\eta.\rangle$, and the norm $\|.\|_\eta$, it may happen that $\|\varphi_n\|_\eta$ satisfies an inequality like the one in (\ref{22}). But $\|\Phi_n\|_\eta$ does not, in general. For this reason, in what follows we will describe a rather general version of the problem, and propose a solution for it. Then we will adopt the strategy we are going to construct to the situation considered in Section \ref{sectIII}.

Let $\F_c=\{c_n(x), \, n\geq0\}$ be an o.n. basis in $\Lc^2(\mathbb{R})$, not necessarily coincident with the basis in (\ref{313}), and let $\rho_f(x)$ and $\rho_g(x)$ be two Lebesgue-measurable functions such that, calling
\be
 f_n(x)=c_n(x)\,\rho_f(x), \qquad  g_n(x)=c_n(x)\,\rho_g(x),
\label{41}
\en
we have $f_n(x)\,g_m(x)\in\Lc^1(\mathbb{R})$, for all $n,m\geq0$. This implies that, despite of $f_n(x)$ or $g_n(x)$ being square-integrable or not, the form $\langle f_n,g_m\rangle$ is always well defined. With a slight abuse of language, we still call $\langle f_n,g_m\rangle$ the scalar product between $f_n(x)$ and $g_m(x)$. Motivated by (\ref{315}), we restrict here to the case in which $\rho_f(x)=\overline{\rho_g(x)}^{\,-1}$. In this way it is clear that $\F_f=\{f_n(x)\}$ and $\F_g=\{g_n(x)\}$ are biorthonormal: 
\be
\langle f_n,g_m\rangle=\delta_{n,m}.
\label{42}\en
We stress once more that we use this term, biorthonormal, in an extended sense, since we are not requiring here that $f_n(x)$ and $g_m(x)$ belong to $\Lc^2(\mathbb{R})$, while we are sure that $f_n(x)g_m(x)\in\Lc^1(\mathbb{R})$ anyhow. In analogy with (\ref{316}) we define now the set
\be\V=\left\{v(x)\in\Lc^2(\mathbb{R}):\, v(x)\rho_j(x)\in\Lc^2(\mathbb{R}), \, j=f,g\right\}
\label{43}\en
Repeating what we have already discussed for $\E$ we conclude that, if $\rho_f(x)$ and $\rho_g(x)$ are $C^\infty$ functions, then $\D(\mathbb{R})\subseteq\V$, which is therefore dense in $\Lc^2(\mathbb{R})$. Moreover, it is easy to check that $\V$ is closed under linear combinations: if $v_1(x), v_2(x)\in\V$, then $\alpha_1v_1(x)+\alpha_2v_2(x)\in\V$ as well, for all complex $\alpha_1,\alpha_2$. In our conditions we can check that $(\F_f,\F_g)$ are $\V$-quasi bases,
\be
\langle v,w\rangle=\sum_{n\geq0}\langle v,f_n\rangle\langle g_n,w\rangle=\sum_{n\geq0}\langle v,g_n\rangle\langle f_n,w\rangle,
\label{44}\en
for all $v(x), w(x)\in\V$, so that they are both complete in $\V$. 

Let us now observe that, even if, say, $f_n(x)\notin\Lc^2(\mathbb{R})$, for instance because its asymptotic behaviour is not the proper one, $af_n(x)$ can make sense, $a$ being some specific operator, even if $af_n(x)$ could not be in $\Lc^2(\mathbb{R})$. Or, maybe, $b^\dagger g(x)$ can be well defined even if $g_n(x)\notin\Lc^2(\mathbb{R})$ for some operator $b^\dagger$, even if  $b^\dagger g_n(x)\notin\Lc^2(\mathbb{R})$. This is exactly what happens in Section \ref{sectIII}. For this reason, it is natural to suppose the following: there exist two linear operators $a$ and $b$, together with a strictly increasing sequence $\{\alpha_n\}$, with $\alpha_0=0$, such that $af_n=\alpha_n f_{n-1}$, $b^\dagger g_n=\alpha_n g_{n-1}$, $n\geq1$, and $af_0=b^\dagger g_0=0$. Calling, as before, $\overline{\alpha}=\sup_n\alpha_n$ we know that $N(|z|)$ defined as in (\ref{23}) exists inside $C_{\overline{\alpha}}(0)$, a circle in the complex plane centered in the origin and of radius $\overline{\alpha}$. So far, there are not many differences with what stated by Theorem \ref{theo1}. But now, it is clear that it makes no sense to use conditions (\ref{22}), since these would refer to the norm of $f_n(x)$ and $g_n(x)$, which can be infinite here. Hence, no bound as those in (\ref{22}) is expected to hold now, and the convergence of the analogous of the $\varphi(z)$ and $\Psi(z)$ in (\ref{24}) is not granted, at all. For this reason we need to adopt a different strategy which is quite close to what one does when introducing distributions as continuous functionals on some special set of functions, see \cite{gel} for instance.

Let $v\in\V$, and let us consider the following series
\be
S_{f,v}(z)=\sum_{n\geq0}\,\frac{z^n}{\alpha_n!}\langle f_n,v\rangle, \qquad S_{g,v}(z)=\sum_{n\geq0}\,\frac{z^n}{\alpha_n!}\langle g_n,v\rangle.
\label{45}\en
Both these series converge, for all $v\in\V$, inside $C_{\overline{\alpha}}(0)$. In fact, 
$$
|\langle f_n,v\rangle|=|\langle c_n\rho_f,v\rangle|=|\langle c_n,\overline{\rho_f}v\rangle|\leq \|c_n\| \|\overline{\rho_f}v\|= \|\overline{\rho_f}v\|<\infty,
$$
since $\overline{\rho_f(x)}v(x)$ belongs to $\Lc^2(\mathbb{R})$ by the definition of $\V$. Hence
$$
|S_{f,v}(z)|\leq \|\overline{\rho_f}v\|\sum_{n\geq0}\,\frac{|z|^n}{\alpha_n!},
$$
which converges inside $C_{\overline{\alpha}}(0)$, as stated, independently of $v\in\V$. Similarly we find
$$
|S_{g,v}(z)|\leq \|\overline{\rho_g}v\|\sum_{n\geq0}\,\frac{|z|^n}{\alpha_n!},
$$
which is also converging inside $C_{\overline{\alpha}}(0)$ for all $v\in\V$. Then we introduce two vectors, $f(z)$ and $g(z)$, $z\in C_{\overline{\alpha}}(0)$, using the following definitions:
\be
\langle f(z),v\rangle=N(|z|)\,S_{f,v}(\overline{z})=N(|z|)\sum_{n\geq0}\,\frac{\overline{z}^n}{\alpha_n!}\langle f_n,v\rangle, \label{46}\en
and \be \langle g(z),v\rangle=N(|z|)\,S_{g,v}(\overline{z})=N(|z|)\sum_{n\geq0}\,\frac{\overline{z}^n}{\alpha_n!}\langle g_n,v\rangle. \label{47}\en
Of course, these equations do not give us the explicit analytic expressions for $f(z)$ and $g(z)$, but they can be used to define $f(z)$ and $g(z)$ {\em in a weak sense}. For that, we start introducing two maps, from $\V$ to $\mathbb{C}$, as follows:
\be
F(z)[v]=\langle f(z),v\rangle, \qquad G(z)[v]=\langle f(z),v\rangle.
\label{48}\en
It is clear that both $F(z)$ and $G(z)$ are linear functionals on $\V$. Next we introduce on $\V$ a topology $\tau_\V$ as follows: we say that a sequence $\{v_n(x)\}$ in $\V$ is $\tau_\V$-convergent to a certain $v(x)\in\Lc^2(\mathbb{R})$ if  $\{v_n(x)\}$ converges to $v(x)$ in the norm $\|.\|$, and if $\{\rho_j(x)\,v_n(x)\}$, $j=f,g$, are Cauchy sequences in $\|.\|$ and converge to $\rho_j(x)v(x)$. It is clear that, when this is true, $v(x)\in\V$. Hence, $\V$ is closed in $\tau_\V$.

Now, if we call $\V'$ the set of all the continuous functionals on $\V$, we can prove the following result:

\begin{prop}
	$F(z)$ and $G(z)$ both belong to $\V'$.
\end{prop}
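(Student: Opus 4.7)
The plan is to exploit the estimates already obtained for the defining series of $F(z)$ and $G(z)$ together with the characterization of $\tau_\V$-convergence. Linearity is essentially immediate: both functionals are defined through $v \mapsto \langle f_n,v\rangle$ and $v\mapsto \langle g_n,v\rangle$, which are conjugate-linear (or linear, depending on the convention fixed for $\langle .,.\rangle$) in $v$, and a locally uniformly convergent series of linear functionals is again linear; so the substantive content is continuity.

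First I would fix $z\in C_{\overline\alpha}(0)$ and revisit the estimate already derived just above the statement, namely
\be
|F(z)[v]| \;\leq\; N(|z|)\,\|\overline{\rho_f}\,v\|\,\sum_{n\geq0}\frac{|z|^n}{\alpha_n!},
\label{planF}
\en
with the analogous bound for $G(z)$ using $\|\overline{\rho_g}\,v\|$. Since by hypothesis $z$ lies strictly inside $C_{\overline\alpha}(0)$, the numerical factor $C(z):=N(|z|)\sum_{n\geq0}\frac{|z|^n}{\alpha_n!}$ is finite and independent of $v$. Thus (\ref{planF}) says exactly that $F(z)$ is bounded by a constant multiple of the seminorm $v\mapsto \|\overline{\rho_f}\,v\|$ on $\V$, and similarly $G(z)$ is controlled by $v\mapsto \|\overline{\rho_g}\,v\|$.

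Next I would translate this into sequential continuity with respect to $\tau_\V$. Suppose $\{v_k\}\subset\V$ with $v_k\xrightarrow{\tau_\V} v$. By definition of $\tau_\V$ we have $\rho_j v_k \to \rho_j v$ in $\|.\|$ for $j=f,g$; in particular $\|\overline{\rho_f}(v_k-v)\|\to 0$ and $\|\overline{\rho_g}(v_k-v)\|\to 0$. By linearity and (\ref{planF}) applied to $v_k-v$,
\be
|F(z)[v_k]-F(z)[v]| \;\leq\; C(z)\,\|\overline{\rho_f}(v_k-v)\|\;\longrightarrow\; 0,
\label{planF2}
\en
and the analogous inequality shows $G(z)[v_k]\to G(z)[v]$. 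This establishes continuity of $F(z)$ and $G(z)$ on $(\V,\tau_\V)$, so both belong to $\V'$.

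The argument is genuinely short because the difficulty was absorbed into the construction: the topology $\tau_\V$ was \emph{designed} precisely to make the maps $v\mapsto \overline{\rho_j}\,v$ continuous into $\Lc^2(\mathbb R)$, which is exactly the regularity used in the bound (\ref{planF}). The only point that deserves a brief check is that the bounding series $\sum_{n\geq 0}|z|^n/\alpha_n!$ is finite on $C_{\overline\alpha}(0)$; this was already verified earlier when proving convergence of $S_{f,v}(z)$ and $S_{g,v}(z)$, so no additional work is required. If one wished a topological (as opposed to sequential) statement, the inequality (\ref{planF}) shows directly that $F(z)$ and $G(z)$ are continuous for the (seminormed, hence first-countable) topology generated on $\V$ by $\|.\|$ and $\|\overline{\rho_j}\,\cdot\|$, which is nothing but $\tau_\V$; so sequential continuity and continuity coincide here.
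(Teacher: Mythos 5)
Your proof is correct and follows essentially the same route as the paper: both arguments reduce continuity to the bound $|F(z)[v]|\leq N(|z|)\,\|\overline{\rho_f}\,v\|\sum_{n\geq0}|z|^n/\alpha_n!$ (obtained via $\langle f_n,v\rangle=\langle c_n,\overline{\rho_f}\,v\rangle$ and $\|c_n\|=1$), applied to $v_k-v$, together with the fact that $\tau_\V$-convergence forces $\|\overline{\rho_f}(v_k-v)\|\to0$. Your added remarks on linearity and on the equivalence of sequential and topological continuity are harmless refinements of the same argument, not a different method.
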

\begin{proof}
We only need to prove that $F(z)$ and $G(z)$ are continuous. We will give the proof for $F(z)$, since that for $G(z)$ is completely analogous.

Let $\{v_k(x)\}$ be a sequence in $\V$  $\tau_\V$-convergent to a certain $v(x)\in\V$. We need to prove that $F(z)[v_k-v]\rightarrow0$, when $k$ diverges. Indeed, since for all $k,n\geq0$, $\langle f_n,v_k-v\rangle=\langle c_n,\overline{\rho_f}(v_k-v)\rangle$, we have
$$
|F(z)[v_k-v]|=\left|N(|z|)\sum_{n\geq0}\,\frac{\overline{z}^n}{\alpha_n!}\langle f_n,v_k-v\rangle\right|\leq N(|z|)\sum_{n\geq0}\,\frac{|z|^n}{\alpha_n!}|\langle f_n,v_k-v\rangle|,
$$
so that
$$
|F(z)[v_k-v]|\leq N(|z|)\sum_{n\geq0}\,\frac{|z|^n}{\alpha_n!}|\langle c_n,\overline{\rho_f}(v_k-v)\rangle|\leq \|\overline{\rho_f}(v_k-v)\|N(|z|)\sum_{n\geq0}\,\frac{|z|^n}{\alpha_n!}\rightarrow0,
$$
for $k\rightarrow\infty$, due to the definition of $\tau_\V$ and to the convergence of the series inside $C_{\overline{\alpha}}(0)$.

\end{proof}

The continuous functionals $F(z)$ and $G(z)$ are what we call {\em weak bi-coherent states} (wbcs). The reason is contained in the following proposition, where the analogous of (\ref{25}) and (\ref{27}), two of the essential properties of any (bi-)coherent states, are proven. We need first to introduce an useful subspace of $\V$:
\be\V_0=\left\{w(x)\in\V:\, w(x)\in D(a^\dagger)\cap D(b), \quad a^\dagger w(x), bw(x)\in \V\right\}
\label{49}\en
Of course the explicit form of $\V_0$ depends also on $a$ and $b$. We will show later that, at least in the situation considered in Section \ref{sectIII}, $\V_0$ is dense in $\Lc^2(\mathbb{R})$ since it contains $\D(\mathbb{R})$.

\begin{prop}\label{prop2}
	
	The wbcs satisfy the following properties:
	
	(i) for all $v(x)\in \V_0$ we have
	\be
	\langle v,af(z)\rangle=z\langle v,f(z)\rangle, \qquad \langle v,b^\dagger g(z)\rangle=z\langle v,g(z)\rangle,
	\label{410}\en
	for all $z\in C_{\overline{\alpha}}(0)$.
	
	(ii) Suppose that a measure $d\lambda(r)$ does exist such that
	\be
	\int_0^{\overline{\alpha}} d\lambda(r) r^{2k}=\frac{(\alpha_k!)^2}{2\pi},
	\label{411}\en
	for all $k\geq0$. Then, putting $z=re^{i\theta}$ and calling $d\nu(z,\overline z)=N(r)^{-2}d\lambda(r)d\theta$, we have
	\be
	\int_{C_{\overline{\alpha}}(0)}\left<v,f(z)\right>\left<g(z),w\right>d\nu(z,\overline z)=
	\int_{C_{\overline{\alpha}}(0)}\left<v,g(z)\right>\left<f(z),w\right>d\nu(z,\overline z)=
	\left<v,w\right>,
	\label{412}\en
	for all $v,w\in\V$.

\end{prop}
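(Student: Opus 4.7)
The plan is to reduce both claims to manipulations of the series that \emph{define} $f(z)$ and $g(z)$. Since these live only weakly, as elements of $\V'$, the action of $a$ and $b^\dagger$ on them must be introduced by duality: for $v\in\V_0$ one sets $\langle v, af(z)\rangle := \langle a^\dagger v, f(z)\rangle$ and $\langle v, b^\dagger g(z)\rangle := \langle bv, g(z)\rangle$. The definition of $\V_0$ guarantees $a^\dagger v, bv\in\V$, so these right-hand sides make sense via (\ref{46})--(\ref{47}).

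For part (i), complex-conjugating (\ref{46}) with $a^\dagger v$ in place of $v$ gives
\[
\langle v, af(z)\rangle = \langle a^\dagger v, f(z)\rangle = N(|z|)\sum_{n\geq 0}\frac{z^n}{\alpha_n!}\langle v, af_n\rangle.
\]
The hypotheses $af_0 = 0$ and $af_n = \alpha_n f_{n-1}$ kill the $n=0$ term; reindexing $m = n-1$ then telescopes the rest into $z\,N(|z|)\sum_{m\geq 0}\frac{z^m}{\alpha_m!}\langle v, f_m\rangle = z\langle v, f(z)\rangle$. The second identity in (\ref{410}) is entirely analogous: $\langle v, b^\dagger g(z)\rangle = \langle bv, g(z)\rangle$ expands through (\ref{47}) as $N(|z|)\sum_n \frac{z^n}{\alpha_n!}\langle v, b^\dagger g_n\rangle$, and $b^\dagger g_n = \alpha_n g_{n-1}$ with $b^\dagger g_0 = 0$ produces the same telescoping.

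For part (ii), I would write $z = re^{i\theta}$ and carry out the $\theta$-integral first. Since the series for $\langle v, f(z)\rangle$ and $\langle g(z), w\rangle$ converge uniformly in $\theta$ at fixed $r\in[0,\overline\alpha)$, term-by-term integration yields
\[
\int_0^{2\pi}\langle v, f(z)\rangle\langle g(z), w\rangle\,d\theta = 2\pi N(r)^2\sum_{n\geq 0}\frac{r^{2n}}{(\alpha_n!)^2}\langle v, f_n\rangle\langle g_n, w\rangle,
\]
because $\int_0^{2\pi}e^{i(n-m)\theta}d\theta = 2\pi\delta_{n,m}$ kills all off-diagonal terms. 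The factor $N(r)^2$ cancels the $N(r)^{-2}$ in $d\nu$; by (\ref{411}), each remaining radial integral contributes $(\alpha_n!)^2/(2\pi)$, cancelling the denominator, and one is left with $\sum_{n\geq 0}\langle v, f_n\rangle\langle g_n, w\rangle$, which equals $\langle v, w\rangle$ by the $\V$-quasi-basis identity (\ref{44}). The companion formula with $f$ and $g$ exchanged follows line by line.

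The only delicate point is the interchange of sum and $r$-integral in the last step. This is legitimate because $\langle v, f_n\rangle$ and $\langle g_n, w\rangle$ are, up to complex conjugation, the Fourier coefficients in the o.n.\ basis $\F_c$ of the functions $\overline{\rho_f}\,v$ and $\overline{\rho_g}\,w$, which belong to $\Lc^2(\mathbb{R})$ by the very definition of $\V$; both sequences are therefore in $\ell^2$, and Cauchy--Schwarz together with (\ref{411}) yields the uniform bound $\sum_n |\langle v, f_n\rangle\langle g_n, w\rangle|\,\frac{1}{(\alpha_n!)^2}\int_0^{\overline\alpha}r^{2n}d\lambda(r) \leq \frac{1}{2\pi}\|\overline{\rho_f}\,v\|\,\|\overline{\rho_g}\,w\|<\infty$, after which Fubini--Tonelli applies without difficulty.
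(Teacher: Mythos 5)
Your proof is correct and follows essentially the same route as the paper: part (i) is the identical duality argument $\langle v,af(z)\rangle=\langle a^\dagger v,f(z)\rangle$ followed by the ladder relations and reindexing, which is exactly what the definition of $\V_0$ is designed to permit. For part (ii) the paper simply defers to the proof of Theorem \ref{theo1} and notes that the $\V$-quasi-basis property is the key ingredient; you supply the standard details (angular orthogonality, the moment condition (\ref{411}), and the Fubini justification via the $\ell^2$ bound on $\langle v,f_n\rangle=\langle \overline{\rho_f}v,c_n\rangle$), all of which are sound and consistent with the argument the paper has in mind.
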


\begin{proof}
	We start proving (i). Since, by assumption, $v(x)\in\V_0$, then   $v(x)\in D(a^\dagger)$ and $a^\dagger v(x)\in \V$. Hence we have
	$$
	\langle v,af(z)\rangle=\langle a^\dagger v,f(z)\rangle=N(|z|)\sum_{n\geq0}\,\frac{z^n}{\alpha_n!}\langle a^\dagger v, f_n\rangle=N(|z|)\sum_{n\geq0}\,\frac{z^n}{\alpha_n!}\langle v,af_n\rangle.
	$$
	using (\ref{46}). Now, since $af_0=0$ and $af_n=\alpha_n f_{n-1}$, $n\geq1$, the first equality in (\ref{410}) easily follows. The second equality can be proved in a similar way.
	
	The proof of (ii) does not differ significantly from that of Theorem \ref{theo1}, and will not be given here. We only stress that the crucial aspect of the proof is the fact that $(\F_f,\F_g)$ are $\V$-quasi bases.

\end{proof}

This proposition shows that our wbcs produce, together, a resolution of the identity on $\V$. It also shows that they are {\em weak} eigenstates of the lowering operators $a$ and $b^\dagger$, but this is only granted on $\V_0$, and not on $\V$.

In the last part of this section we apply the general results deduced so far to what we have seen in Section \ref{sectIII}, to check if and when wbcs can be defined. Formula (\ref{312}) shows that we are indeed in the situation described in (\ref{41}), with $\varphi_n(x)$, $\rho_A(x)$, $\Psi_n(x)$ and $\rho_B(x)$ identified respectively with $f_n(x)$, $\rho_f(x)$, $g_n(x)$ and $\rho_g(x)$. The o.n. basis $\F_c$ in (\ref{41}) is constructed as in (\ref{313}). Now, since $A$ and $B$ are pseudo-bosonic operators, $\alpha_n$ is easily identified: $\alpha_n=\sqrt{n}$, which implies that $\overline{\alpha}=\infty$. Condition (\ref{411}) on the measure $d\lambda(r)$ reads therefore
$$
	\int_0^{\infty} d\lambda(r) r^{2k}=\frac{k!}{2\pi},
$$
as in Section \ref{bcsexample}. Hence $d\lambda(r)=\frac{1}{\pi}e^{-r^2}r\,dr$, which is the same result we have found in (\ref{322}). This means that the forms $F(z)[v]$ and $G(z)[v]$ are well defined for all $v\in\V$ and for all $z\in\mathbb{C}$, and that (\ref{412}) is satisfied with these particular choices.

As for (\ref{410}), it is useful to show that $\V_0$ is {\em large enough}. Otherwise these equalities are not particularly useful. For instance, if $\V_0$ would contain only the zero vector, (\ref{410}) is surely true, but it becomes completely trivial. Luckily enough, this is not the case here. Indeed, we can check that $\D(\mathbb{R})\subseteq\V_0$. In fact, we start observing that $\V$ coincides, in this case, with the set $\E$ in (\ref{316}) which, as already noticed, contains $\D(\mathbb{R})$. Therefore $\D(\mathbb{R})\subseteq\V$. Now, let $v_0(x)\in\D(\mathbb{R})$. Then $v_0(x)\in\V$. Moreover, $v_0(x)$ belongs to the domain of both $A^\dagger$ and $B$. This is because, for instance
$$
Bv_0(x)=\left(-\frac{d}{dx}+w_B(x)\right)v_0(x)=-v_0'(x)+w_B(x)v_0(x),
$$
which is again in $\D(\mathbb{R})$ being the difference of two functions, $v_0'(x)$ and $+w_B(x)v_0(x)$, which are both in $\D(\mathbb{R})$. In particular, we are using here the fact that $w_B(x)$ is $C^\infty$. So the conclusion is clear: $Bv_0(x)$ and $A^\dagger v_0(x)$ are both in $\D(\mathbb{R})$, so that they are also both in $\V$. Hence $v_0(x)$ belongs to $\V_0$, which for this reason is dense in $\Lc^2(\mathbb{R})$. 

We can conclude that, for the operators considered in Section \ref{sectIII}, the statements proved in Proposition \ref{prop2} hold in {\em large sets}, so that wbcs can be defined under very general conditions on $w_A(x)$ and $w_B(x)$.

\section{Conclusions}\label{sectconl}

This paper is a continuation of our previous analysis on deformed canonical commutation relations and its related pseudo-bosons and bi-coherent states. In particular, we have shown how to extend the original approach outside $\Lc^2(\mathbb{R})$, keeping untouched the possibility of defining eigenstates of certain number-like operators, manifestly non self-adjoint, but opening to the possibility that these states are not square-integrable. The example discussed in this paper is constructed in terms of two different, but related, superpotentials. For this reason the approach described here could be relevant for supersymmetric quantum mechanics. But this is possible only when considering choices of $w_A(x)$ and $W_B(x)$ not giving rise to pseudo-bosonic operators. Otherwise, as discussed before, the super-symmetric partner of the starting Hamiltonian coincides with this, except that for an additive constant.

In the second part of the paper we have shown how bi-coherent states can be introduced for the PBSs considered here, and we have constructed a general framework to consider a situation in which $\Lc^2(\mathbb{R})$ is not sufficient, in the same line as what was done recently in \cite{bag2020JPA}. In particular we have given conditions on the superpotentials for the wbcs to be well defined, and in order for them to satisfy analogous properties as those of ordinary coherent and bi-coherent states.

\section*{Acknowledgements}

The author acknowledges partial support from Palermo University and from G.N.F.M. of the INdAM.

\end{document}